\newtheorem{theorem}{Theorem}[section]
\newtheorem{lemma}{Lemma}
\newtheorem{corollary}{Corollary}
\newtheorem{remark}{Remark}
\newtheorem{definition}{Definition}
\journal{Journal of \LaTeX\ Templates}
\begin{document}

\begin{frontmatter}
	
	\title{\textbf{ Stochastic Dynamics and Probability Analysis for a Generalized 
    Epidemic Model with Environmental Noise }}
	\author[label1]{ Brahim Boukanjime\corref{cor1}}
\cortext[cor1]{Corresponding author}
\ead{brahim.boukanjime@gmail.com}
\affiliation[label1]{organization={Laboratory of Mathematical Modeling and Economic Calculations, Hassan First University of Settat},
  country={Morocco}}

\author[label2]{Mohamed Maama\corref{cor2}}
\cortext[cor2]{Corresponding author}
\affiliation[label2]{organization={Applied Mathematics and Computational Science Program,  Computer, Electrical and Mathematical Sciences and Engineering Division, King Abdullah University of Science and Technology},
	country={Kingdom of Saudi Arabia}}
 \ead{maama.mohamed@gmail.com}
 
	\begin{abstract}
In this paper we consider a stochastic SEIQR (susceptible-exposed-infected-quarantined-recovered) epidemic model with a generalized incidence function. Using the Lyapunov method, we establish the existence and uniqueness of a global positive solution to the model, ensuring that it remains well-defined over time. Through the application of Young’s inequality and Chebyshev’s inequality, we demonstrate the concepts of stochastic ultimate boundedness and stochastic permanence, providing insights into the long-term behavior of the epidemic dynamics under random perturbations. Furthermore, we derive conditions for stochastic extinction, which describes scenarios where the epidemic may eventually die out, and V-geometric ergodicity, which indicates the rate at which the system's state converges to its equilibrium. Finally, we perform numerical simulations to verify our theoretical results and assess the model’s behavior under different parameters.

	\end{abstract}
	\begin{keyword}
		Stochastic Modeling, Epidemic Models, SEIQR Model, Lyapunov Method, Stochastic, Stochastic Permanence, Stochastic Extinction, V-Geometric Ergodicity
	\end{keyword}
	
\end{frontmatter}

\section{Introduction}
Historically, the spread of infectious diseases has posed a significant threat to humanity. In the absence of immunity, entire civilizations have been reduced to ruins and monuments that stand as reminders of the impact of epidemics. In the early 1500s, smallpox devastated the Inca population, wiping out $93\%$ of it during the Spanish conquest. Similarly, the Bubonic plague resulted in the deaths of $10.000$ people daily in Europe between $1346$ and $1350$, in recent times, the ongoing Coronavirus (COVID-19) caused a deep international crisis with 2.3 million deaths resulting in global health problems and changes in lifestyle. Since then, advancements in technology and medicine have played a crucial role in mitigating such disasters. However, scientists have noted that these advancements might enable pathogens to evolve and find new ways to spread. This concern has driven mathematicians and computational scientists to develop concrete models to understand and predict such real-life phenomena. Unlike other sciences, mathematical models do not undergo experimental validation, and conducting experiments on disease spread would be both impractical and unethical. Thus, creating models to describe the dynamics of epidemics is crucial for devising real-time strategies to manage and control outbreaks by anticipating their behavior and taking proactive measures.\\
Mathematical modeling emerges as a crucial tool in this context. It is a vital component of understanding disease outbreaks and making epidemiological predictions \cite{01,02,03,04,05}. Mathematical models represent complex systems and scenarios, enabling the analysis of real-world problems and forecasting future events. These models translate biological and epidemiological processes into mathematical frameworks, providing valuable insights into disease dynamics and assisting in the formulation of effective response strategies.
\\
In this regard, our paper aims to advance the understanding of disease dynamics through mathematical modeling. The paper is organized as follows: Section 2 introduces the model structure, providing a detailed description of the underlying mathematical framework. Section 3 explores stochastically ultimate boundedness and permanence, addressing the stability and long-term behavior of the model. Section 4 investigates stochastic extinction, focusing on the conditions under which the disease may die out. Section 5 examines V-geometric ergodicity, offering insights into the model’s long-term behavior. Finally, Section 6 presents numerical simulations to validate the theoretical results and illustrate their practical implications.

\section{The Model Structure}
In this study, we examine the SEIQR model, which assumes that individuals who are infected and symptomatic are placed in quarantine for treatment, whether in a hospital or outpatient treatment (possibly in the home). After their quarantine period, these individuals transition into the recovered category. The population is stratified into five distinct compartments following the SEIQR model framework: Susceptible $(S)$, Exposed $(E)$, Infectious $(I)$, Quarantined $(Q)$, and Recovered $(R)$. The dynamics of these compartments are governed by the following deterministic system of differential equations:
\begin{equation}
\begin{aligned}
\frac{dS}{dt} &= b - \beta S h(I) - (\eta + \mu + d_3) S,\\
\frac{dE}{dt} &= \beta S h(I) - (\lambda + \mu + d_2) E,\\
\frac{dI}{dt} &= \lambda E - (\mu + \nu + \gamma + d_1) I,\\
\frac{dQ}{dt} &= d_3 S + d_2 E + d_1 I - (\mu + \tau) Q,\\
\frac{dR}{dt} &= \eta S + \tau Q + \gamma I - \mu R,
\end{aligned}
\end{equation}
where all parameters are positive constants. The model parameters are defined as follows: $b$ is the birth rate, $\beta$ is the transmission rate, $\mu$ is the natural death rate, $\eta$ is the vaccination rate, and $d_3$, $d_2$, $d_1$ are the transfer rates from the Susceptible, Exposed, and Infectious compartments to the Quarantined compartment, respectively. The parameter $\lambda$ represents the rate at which exposed individuals become infectious, $\nu$ is the disease-induced death rate, $\gamma$ is the recovery rate of infectious individuals, and $\tau$ is the recovery rate of quarantined individuals. The transmission of the infection is modeled using a general incidence function $h(I)$, which is a non-negative, twice continuously differentiable function. This function satisfies the conditions $h(0) = 0$ and $h'(0) > 0$, with the ratio $\frac{h(I)}{I}$ being monotonically decreasing for $I \geq 0$, implying that $\frac{h(I)}{I} < h'(0)$ for $I > 0$.
\\
In reality, epidemic models are not immune to environmental fluctuations, which induce random variations in system parameters around their average values. These fluctuations reflect the unpredictable nature of the environment, which impacts the spread of the disease. Such stochastic influences have been rigorously explored in previous studies, such as those in \cite{07,08,09,general,010,011,012}. Acknowledging the significance of these fluctuations, we extend the deterministic model to a stochastic framework by incorporating environmental noise. The resulting stochastic model employs a generalized incidence function as introduced in \cite{general}, and is described by the following system of stochastic differential equations (SDEs):

\begin{equation}\label{sys1}
\left\{ 
\begin{aligned}
dS(t) &= \left[b - \beta S h(I) - (\eta + \mu + d_3) S\right]dt + \sigma_1 S dB_1(t),\\
dE(t) &= \left[\beta S h(I) - (\lambda + \mu + d_2) E\right]dt + \sigma_2 E dB_2(t),\\
dI(t) &= \left[\lambda E - (\mu + \nu + \gamma + d_1) I\right]dt + \sigma_3 I dB_3(t),\\
dQ(t) &= \left[d_3 S + d_2 E + d_1 I - (\mu + \tau) Q\right]dt + \sigma_4 Q dB_4(t),\\
dR(t) &= \left[\eta S + \tau Q + \gamma I - \mu R\right]dt + \sigma_5 R dB_5(t),
\end{aligned}
\right.
\end{equation}
where $B_i(t)$, for $i=1,\ldots,5$, are independent standard Brownian motions defined on a complete probability space $\left(\Omega, \mathcal{F},\lbrace\mathcal{F}_t\rbrace_{t\geq 0}, \mathbb{P}\right)$, with $\sigma_i^2>0$ representing the perturbation intensities corresponding to each compartment. These Brownian motions introduce stochastic perturbations into the model, capturing the random fluctuations in the population dynamics due to environmental variability.

This stochastic framework allows us to more realistically capture the unpredictable nature of epidemic spread in a varying environment, providing a more robust basis for understanding the impact of noise on disease dynamics. The model presented here serves as a foundation for further analysis, including the investigation of long-term behavior, stability, and the effects of stochastic perturbations on disease control strategies.

\section{Existence and uniqueness of the global positive solution}
As the solution of stochastic differential equation \eqref{sys1} has biological significance, it should be non-negative. The solution of model \eqref{sys1} may explode in a finite time, since the coefficients don’t satisfy the linear growth condition, even though they satisfy the local Lipschitz condition. Khasminskii and Mao gave the Lyapunov function argument, which is a powerful test for non-explosion of solutions without the linear growth condition and referred to as the Khasminskii–Mao theorem \cite{1,2}.

\begin{theorem}\label{mth}
		For any given initial value $(S(0),E(0),I(0),Q(0),R(0)) \in\mathbb{R}^5_+$, there exists  a unique solution $(S(t),E(t),I(t),Q(t),R(t))$ of the model \eqref{sys1} on $t\geq 0$ and the solution will remain in $\mathbb{R}^5_+$ with probability one.
 	
\end{theorem}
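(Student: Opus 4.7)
The plan is to apply the classical Khasminskii--Mao non-explosion argument via an entropy-type Lyapunov function. Since the drift and diffusion coefficients in \eqref{sys1} are locally Lipschitz on $\mathbb{R}^5_+$, for any initial condition $(S(0),E(0),I(0),Q(0),R(0))\in\mathbb{R}^5_+$ there exists a unique maximal local solution defined on a stochastic interval $[0,\tau_e)$, where $\tau_e$ is the explosion time. The whole task is therefore to prove that $\tau_e=\infty$ almost surely while the solution stays in $\mathbb{R}^5_+$.

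To make this rigorous, I would choose $k_0\in\mathbb{N}$ large enough that every component of the initial data lies in $[1/k_0,k_0]$, and for each $k\geq k_0$ define the stopping time
\[
\tau_k=\inf\bigl\{t\in[0,\tau_e):\min(S,E,I,Q,R)(t)\leq 1/k\ \text{or}\ \max(S,E,I,Q,R)(t)\geq k\bigr\},
\]
with $\inf\emptyset=\infty$. Writing $\tau_\infty=\lim_{k\to\infty}\tau_k\leq\tau_e$, it suffices to show $\tau_\infty=\infty$ a.s., because this simultaneously forbids explosion and prevents the solution from exiting $\mathbb{R}^5_+$.

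For the Lyapunov step I would take
\[
V(S,E,I,Q,R)=(S-1-\ln S)+(E-1-\ln E)+(I-1-\ln I)+(Q-1-\ln Q)+(R-1-\ln R),
\]
which is non-negative on $\mathbb{R}^5_+$ by the elementary inequality $x-1-\ln x\geq 0$. Applying It\^o's formula and collecting the deterministic and second-order terms, the cross-compartment contributions $\beta S h(I)$, $\lambda E$, $d_iS,d_iE,d_iI$, $\eta S,\tau Q,\gamma I$ cancel pairwise against their counterparts from the other equations; the truly harmful terms are the ``$-1/x_i$'' corrections, which, after dropping the manifestly negative pieces $-b/S$, $-\beta S h(I)/E$, $-\lambda E/I$ and the analogous ratios from $Q,R$, leave a residual bound of the form
\[
\mathcal{L}V\leq b+\beta h(I)+C_0+\tfrac12\sum_{i=1}^{5}\sigma_i^2,
\]
where $C_0$ collects the constants $(\eta+\mu+d_3),(\lambda+\mu+d_2),(\mu+\nu+\gamma+d_1),(\mu+\tau),\mu$. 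Here I would use the hypothesis $h(I)/I\leq h'(0)$ stated in the model, which gives $\beta h(I)\leq \beta h'(0) I$. Since the function $x\mapsto x-1-\ln x$ dominates $x$ up to an additive constant, we can absorb the linear term in $I$ into $V$ and conclude
\[
\mathcal{L}V\leq K_1+K_2 V
\]
for two positive constants $K_1,K_2$ depending only on the model parameters.

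With this differential inequality in hand, the conclusion follows the standard route: integrating from $0$ to $t\wedge\tau_k$, taking expectations to eliminate the martingale part, and invoking Gronwall's lemma, one obtains $\mathbb{E}[V(X(t\wedge\tau_k))]\leq (V(X(0))+K_1/K_2)e^{K_2 t}$ for every $t\geq 0$. On the event $\{\tau_k\leq T\}$ at least one component of $X(\tau_k)$ equals either $1/k$ or $k$, so $V(X(\tau_k))\geq \min\{k-1-\ln k,\ 1/k-1+\ln k\}\to\infty$. Combining the two estimates via Chebyshev yields $\mathbb{P}(\tau_k\leq T)\to 0$ as $k\to\infty$ for every fixed $T>0$, hence $\tau_\infty=\infty$ almost surely. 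The main obstacle in the argument is the nonlinear incidence $\beta S h(I)$: one needs the monotonicity hypothesis on $h(I)/I$ precisely to produce the linear-in-$I$ upper bound that allows reabsorption into $V$; once that structural property is exploited, all remaining steps are routine.
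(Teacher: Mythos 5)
Your proposal is correct and follows exactly the route the paper intends: the Lyapunov function you choose, $\sum_i (x_i-1-\ln x_i)$, is identical to the paper's $\tilde V=S+E+I+Q+R-\ln S-\ln E-\ln I-\ln Q-\ln R-5$, and the paper simply declares the Khasminskii--Mao stopping-time argument standard and omits it. You have correctly supplied the missing details, including the one nontrivial point (bounding $\beta h(I)$ via $h(I)/I\leq h'(0)$ and reabsorbing the linear term into $V$), so no further changes are needed.
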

\begin{proof}
Consider the $\mathcal{C}^2$-function \~{V}: $\mathbb{R}_+^5\longrightarrow \mathbb{R}_+$ by
\begin{eqnarray*}
\tilde{V}(S,E,I,Q,R)=S+E+I+Q+R-\ln S -\ln E-\ln I-\ln Q-\ln R-5.
\end{eqnarray*}
By using the Lyapunov analysis method \cite{2}, we can prove that the solution of system \eqref{sys1} is positive and global. The proof of Theorem \ref{mth} is standard (see for example, the proof of Lemma 2.1 in \cite{3}) and hence is omitted here.

\end{proof}
In the following, we are going to define the notions of stochastically
ultimate boundedness and permanence, and we will subsequently show that the solution of system \eqref{sys1} verifies these properties.

\section{Stochastically ultimate boundedness and permanence}
In this section, we aim at studying the stochastic permanence of model \eqref{sys1}. First of all, we present the following definitions.
\begin{definition}[Stochastically ultimately bounded \cite{3,4}]\label{ulb}~~\\
	The solution $X(t)=(S(t),E(t),I(t),Q(t),R(t))$ of model \eqref{sys1} is said to be stochastically ultimately bounded, if for any $\epsilon\in (0,1),$ there is a positive constant $\chi=\chi(\epsilon)$, such that for any initial value $X(0)=X_0\in\mathbb{R}_+^5$  the solution $X(t)$ has the property
	\begin{eqnarray*}
		\limsup\limits_{t \rightarrow +\infty}\mathbb{P}(\vert\vert X(t,X_0)\vert\vert > \chi )< \epsilon.
	\end{eqnarray*}
\end{definition}

\begin{remark}\label{rr}
	In \cite{7,8}, the authors gave a new definition for stochastically ultimate boundedness, according to which the system \eqref{sys1} is stochastically bounded in probability, if for any $\epsilon\in (0,1),$ there is some $\chi=\chi(\epsilon)>0$ for which the following inequality:
	\begin{eqnarray}\label{gou}
		\limsup\limits_{t \rightarrow +\infty}\mathbb{P}(\vert\vert X(t,X_0)\vert\vert \leq \chi )\geq 1- \epsilon.
	\end{eqnarray}
is satisfied for any initial value $X(0)=X_0\in\mathbb{R}_+^5$. We mention that the inequality \eqref{gou} can equivalently be rewritten as:
	\begin{eqnarray*}\label{gou}
	\liminf\limits_{t \rightarrow +\infty}\mathbb{P}(\vert\vert X(t,X_0)\vert\vert > \chi )\leq \epsilon.
\end{eqnarray*}
\end{remark}

\begin{definition}[Stochastic persistence\label{zlb}  \cite{8}]~~\\
	The model \eqref{sys1} is said to be stochastically persistent if for any $\epsilon\in (0,1),$ there exists a positive constants $\kappa=\kappa(\epsilon)$ such that for any initial value $X(0)=X_0\in\mathbb{R}_+^5$ the solution $X(t)$ has the property
	\begin{eqnarray*}
\liminf\limits_{t \rightarrow +\infty}\mathbb{P}(\vert\vert X(t,X_0)\vert\vert\geqslant \kappa )\geqslant 1-\epsilon.
	\end{eqnarray*}
\end{definition}

\begin{definition}[Stochastic permanence \cite{9}]~~\\
The system \eqref{sys1}is called stochastically permanent, if it is both stochastically ultimate bounded and persistent.
\end{definition}

\begin{theorem}
	The model \eqref{sys1} is stochastically  ultimate bounded and stochastically permanent for any initial value $X_0\in\mathbb{R}^5_+$.
\end{theorem}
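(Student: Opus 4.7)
The idea is to reduce the $5$-dimensional analysis to a scalar one by working with the total mass $N(t) := S(t)+E(t)+I(t)+Q(t)+R(t)$, which on $\mathbb{R}^5_+$ satisfies the norm equivalence $\tfrac{1}{\sqrt{5}}\,N(t) \leq \|X(t)\| \leq N(t)$, so that bounds in probability on $N(t)$ from above and below translate directly to bounds on $\|X(t)\|$. Summing the five equations of \eqref{sys1}, the nonlinear transmission $\beta S h(I)$ and the inter-compartment transfer terms $d_i,\eta,\lambda,\gamma,\tau$ cancel telescopically, leaving
\begin{equation*}
dN(t) = \bigl(b - \mu N(t) - \nu I(t)\bigr)\,dt + \sum_{i=1}^{5} \sigma_i X_i(t)\,dB_i(t),
\end{equation*}
with quadratic variation $d\langle N\rangle_t = \bigl(\sum_i \sigma_i^2 X_i^2\bigr)dt \leq \hat{\sigma}^2 N^2\,dt$, where $\hat{\sigma}^2 := \max_i \sigma_i^2$. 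By Theorem \ref{mth}, $N(t)>0$ a.s.\ on $[0,\infty)$, so Itô's formula may be applied to smooth functions of $N$ after a standard localization using stopping times $\tau_n := \inf\{t\geq 0 : N(t)\notin(1/n,n)\}$.

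For the ultimately-bounded part I would apply Itô to $N^p$ for some $p\in(0,1]$, take expectations to eliminate the martingale, and use Young's inequality on the birth term to produce a differential inequality of the form $\tfrac{d}{dt}\mathbb{E}[N^p]\leq C_1 - C_2\,\mathbb{E}[N^p]$. Gronwall then yields $\limsup_{t\to\infty}\mathbb{E}[N(t)^p]\leq K_1$ uniformly in $X_0$, and since $\|X\|\leq N$, Chebyshev's inequality gives
\begin{equation*}
\limsup_{t\to\infty}\mathbb{P}\bigl(\|X(t)\|>\chi\bigr)\leq \frac{K_1}{\chi^p} < \epsilon
\end{equation*}
upon choosing $\chi=\chi(\epsilon)$ sufficiently large; this is exactly Definition \ref{ulb}.

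The delicate step is persistence. I would apply Itô's formula to the inverse Lyapunov function $V(N)=N^{-\theta}$ for a small $\theta>0$. Using $\nu I\leq \nu N$ and $\sum \sigma_i^2 X_i^2 \leq \hat{\sigma}^2 N^2$, a direct computation gives
\begin{equation*}
\mathcal{L}V \;\leq\; -\theta b\,N^{-\theta-1} + \Bigl[\theta(\mu+\nu)+\tfrac{1}{2}\theta(\theta+1)\hat{\sigma}^2\Bigr]\,N^{-\theta}.
\end{equation*}
An elementary one-variable maximization shows that for any $k>0$ the map $N\mapsto kV(N)+\mathcal{L}V(N)$ is bounded above on $(0,\infty)$ by a constant $K_2=K_2(\theta,k)$, because the $N^{-\theta-1}$ term dominates as $N\to 0^+$ and every term vanishes as $N\to\infty$. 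Applying Itô-Doob to $e^{kt}V(N_t)$, passing to the expectation on the localized interval, and letting $n\to\infty$ then yields $\limsup_{t\to\infty}\mathbb{E}[N(t)^{-\theta}]\leq K_2/k$. A final application of Chebyshev gives
\begin{equation*}
\limsup_{t\to\infty}\mathbb{P}\bigl(N(t)<\kappa\bigr) \leq \frac{K_2\,\kappa^\theta}{k} < \epsilon
\end{equation*}
for $\kappa=\kappa(\epsilon)$ small, and the norm equivalence transports this into the lower-tail estimate of Definition \ref{zlb}. Combined with ultimate boundedness, this is stochastic permanence by definition.

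The main obstacle I expect is controlling $\mathcal{L}V$ for the inverse Lyapunov function: one must choose $\theta>0$ small enough that the diffusion contribution $\tfrac{1}{2}\theta(\theta+1)\hat{\sigma}^2 N^{-\theta}$ does not overwhelm the lower-order structure and the maximization giving $K_2$ still returns a finite constant, while also rigorously justifying the removal of the localization so that the stochastic integral in the Itô-Doob identity for $e^{kt}N_t^{-\theta}$ is a genuine martingale whose expectation vanishes. Once these two technical points are handled, the rest is a direct application of Chebyshev's inequality as outlined above.
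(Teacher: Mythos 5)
Your proposal is correct and follows essentially the same route as the paper: reduce to the total population $N(t)=S+E+I+Q+R$, control both the upper and lower tails of $N$ via a Lyapunov function (the paper uses the single combined function $V=N+\tfrac{1}{N}$ with $\mathcal{L}V\le\varpi-\mu V$, where you split the work between $N^{p}$ and $N^{-\theta}$, which amounts to the same computation), then pass to expectations with localization and conclude by Markov/Chebyshev together with the norm equivalence $\tfrac{1}{\sqrt5}N\le\|X\|\le N$. The only point to watch is that the clean inequality $\mathcal{L}(N^{p})\le C_1-C_2N^{p}$ actually requires $p=1$ (for $p<1$ the term $pbN^{p-1}$ is unbounded as $N\to0^{+}$), but since your range includes $p=1$ the argument goes through.
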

\begin{proof}
Making use of It\^o's formula to $e^{\mu t}V(X(t))$ and inequality \eqref{izne}, we get
	\begin{eqnarray}\label{mimou}
		d (e^{\mu t}V(X(t)))&=& \mu e^{\mu t} V(X(t))dt+e^{\mu t}d V(X(t))\nonumber\\\nonumber
		&\leq & \mu e^{\mu t}V(X(t))dt+e^{\mu t}\bigg[(\varpi-\mu V(X(t)))dt+\bigg(1-\dfrac{1}{N^2(t)}\bigg)[\sigma_1 S(t) dB_1(t)\\&& \nonumber+\sigma_2 E(t) dB_2(t)+\sigma_3 I(t) dB_3(t)+\sigma_4 Q(t) dB_4(t)+\sigma_5 R(t) dB_5]\bigg]\\\nonumber
		&\leq & \varpi e^{\mu t}dt+e^{\mu t}\bigg(1-\dfrac{1}{N^2(t)}\bigg)[\sigma_1 S(t) dB_1(t)+\sigma_2 E(t) dB_2(t)+\sigma_3 I(t) dB_3(t)\nonumber\\&&+\sigma_4 Q(t) dB_4(t)+\sigma_5 R(t) dB_5(t)]
	\end{eqnarray} 
By integrating over the interval \( [0, t \wedge \tau_k] \), where \(\wedge\) denotes the minimum operator (\(\min\)), and then taking the expectation on both sides of \eqref{mimou}, we obtain for all $t \geq 0$ and $k\geq k_0$
\begin{eqnarray*}
\mathbb{E}\bigg[e^{\mu (t\wedge \tau_k)}V(X(t\wedge \tau_k))\bigg]\leq V(X(0))+\mathbb{E}\bigg[\displaystyle\int_{0}^{t\wedge \tau_k} \varpi e^{\mu s}ds\bigg]\leq V(X(0))+\dfrac{\varpi}{\mu}(e^{\mu t}-1).
\end{eqnarray*}	
Letting $k\rightarrow\infty$ yields
\begin{eqnarray*}
	\mathbb{E}\bigg[V(X(t))\bigg]\leq V(X(0))e^{-\mu t} +\dfrac{\varpi}{\mu}(1-e^{-\mu t})\leq  V(X(0))e^{-\mu t} +\dfrac{\varpi}{\mu}.
\end{eqnarray*}
Let $\epsilon>0$, and choose $\rho_{\epsilon} = \dfrac{\varpi}{\epsilon \mu}$. By Markov's inequality, we have
\begin{eqnarray*}
\mathbb{P}\bigg(V(X(t)) >\rho_{\epsilon}\bigg)\leq \dfrac{1}{\rho_{\epsilon}}\mathbb{E}[V(X(t))]\leq \dfrac{1}{\rho_{\epsilon}}\bigg(V(X(0))e^{-\mu t}\bigg)+\epsilon.
\end{eqnarray*}
This implies
\begin{eqnarray*}
	\mathbb{P}\bigg(N(t)+\dfrac{1}{N(t)}\leq\rho_{\epsilon}\bigg) \geq 1-\epsilon-\dfrac{1}{\rho_{\epsilon}}\bigg(V(X(0))e^{-\mu t}\bigg).
\end{eqnarray*}
Therefore
\begin{eqnarray*}
\mathbb{P}\bigg(\dfrac{1}{\rho_{\epsilon}}\leq N(t)\leq \rho_{\epsilon} \bigg)	\geq\mathbb{P}\bigg(N(t)+\dfrac{1}{N(t)}\leq\rho_{\epsilon}\bigg) \geq 1-\epsilon-\dfrac{1}{\rho_{\epsilon}}\bigg(V(X(0))e^{-\mu t}\bigg).
\end{eqnarray*}
By noting that
$$N^2\leq 5\vert\vert X\vert\vert^2\leq 5 N^2, $$
we obtain
$$1-\epsilon-\dfrac{1}{\rho_{\epsilon}}\bigg(V(X(0))e^{-\mu t}\bigg)\leq \mathbb{P}\bigg(\dfrac{1}{\rho_{\epsilon}}\leq N(t)\leq \rho_{\epsilon} \bigg)\leq \mathbb{P}\bigg(\dfrac{1}{\sqrt{5}\rho_{\epsilon}}\leq\vert\vert X(t)\vert\vert\leq\rho_{\epsilon} \bigg).$$
Hence
$$\liminf\limits_{t\rightarrow\infty}\mathbb{P}\bigg(\dfrac{1}{\sqrt{5}\rho_{\epsilon}}\leq\vert\vert X(t)\vert\vert\leq\rho_{\epsilon} \bigg)\geq 1-\epsilon.$$
According to Definition \ref{ulb} and Definition \ref{zlb}, the model \eqref{sys1} is stochastically ultimately bounded and permanent. The proof is complete.
\end{proof}

\section{Stochastic extinction }
In this subsection, we will try our best to find a condition for the extinction of the disease expressed in terms of system parameters and intensities of noises.\\\
Before stating the result to be proved, we must firstly give the following useful lemma that was stated and proved as Lemma 2.1 in \cite{10}.
\begin{lemma}\label{ab}
For any initial value $X_0 \in\mathbb{R}^5_+$, the solution $X(t) = (S(t), E(t), I(t), Q(t), R(t))$ of system \eqref{sys1} verifies the following properties:
\begin{eqnarray*}
(a)&~~&\lim\limits_{t\rightarrow\infty}\dfrac{X_k(t)}{t}=0~~\mbox{a.s}~~\forall~k\in\{1,2,...,5\},\\
(b)&~~&\mbox{Moreover, if } \mu>\frac{1}{2}(\sigma_1^2\vee\sigma_2^2\vee\sigma_3^2\vee\sigma_4^2\vee\sigma_5^2),~~\mbox{then}
\lim\limits_{t\rightarrow\infty}\dfrac{\displaystyle\int_{0}^{t}X_k(s)dB_k(s)}{t}=0~~\mbox{a.s}~~\forall~k\in\{1,2,...,5\}.
\end{eqnarray*}	
\end{lemma}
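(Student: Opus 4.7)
The plan is to reduce everything to a priori bounds on the total population $N(t) = S(t)+E(t)+I(t)+Q(t)+R(t)$ and then invoke the standard strong law of large numbers for continuous local martingales.

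For part (a), I would first sum the five equations of \eqref{sys1} to obtain
\begin{equation*}
dN(t) = \bigl[b - \mu N(t) - \nu I(t) - \text{(vaccination/recovery terms cancel)}\bigr]dt + \sum_{i=1}^{5}\sigma_i X_i(t)\, dB_i(t),
\end{equation*}
so in particular $dN \le (b-\mu N)\,dt + \text{martingale}$. Next I would apply It\^o's formula to $V(N)=(1+N)^p$ for a suitable $p>1$ to show that
\begin{equation*}
\mathbb{E}\bigl[(1+N(t))^p\bigr] \le K \qquad \text{for all } t \ge 0,
\end{equation*}
which is the usual moment bound provided $\mu$ dominates the noise contribution (for $p$ sufficiently close to $1$ no smallness on $\sigma_i$ is needed). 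From here the sublinear a.s.\ growth $N(t)/t \to 0$ would follow by a Doob--Borel--Cantelli routine: for fixed $\varepsilon > 0$ and integer $n$,
\begin{equation*}
\mathbb{P}\Bigl(\sup_{n \le t \le n+1} N(t) > n\varepsilon\Bigr) \le \frac{C}{(n\varepsilon)^p},
\end{equation*}
which is summable, so $\limsup_{t\to\infty} N(t)/t \le \varepsilon$ almost surely; letting $\varepsilon \downarrow 0$ through a countable sequence gives $\lim N(t)/t = 0$ a.s. Since $0\le X_k(t)\le N(t)$ for every component $k$, assertion (a) follows immediately.

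For part (b), set $M_k(t) = \int_0^t X_k(s)\,dB_k(s)$, a continuous local martingale with quadratic variation $\langle M_k\rangle_t = \int_0^t X_k^2(s)\,ds$. Under the hypothesis $\mu > \tfrac12\max_i \sigma_i^2$, the same It\^o computation on $(1+N)^2$ yields the stronger second-moment bound $\sup_{t\ge 0}\mathbb{E}[N^2(t)] < \infty$, hence $\mathbb{E}[X_k^2(t)]$ is uniformly bounded and $\mathbb{E}[\langle M_k\rangle_t] \le Ct$. I would then apply the strong law of large numbers for continuous local martingales: on $\{\langle M_k\rangle_\infty = \infty\}$ one has $M_k(t)/\langle M_k\rangle_t \to 0$ a.s., while on $\{\langle M_k\rangle_\infty < \infty\}$ the martingale itself converges. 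Combining this with the linear-in-$t$ control of $\langle M_k\rangle_t$ (obtained pathwise through the exponential martingale inequality and Borel--Cantelli applied to $\int_n^{n+1} X_k^2 ds$) gives $M_k(t)/t \to 0$ a.s., which is (b).

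The main obstacle is really the interplay in step~(b): one needs to upgrade the expectation bound $\mathbb{E}[\langle M_k\rangle_t]\le Ct$ to a pathwise statement $\langle M_k\rangle_t / t$ remains bounded (or at worst $o(t^\rho)$ for some $\rho < 2$) so that the SLLN for martingales applies. This is exactly where the assumption $\mu > \tfrac12\max_i\sigma_i^2$ enters — it is the threshold guaranteeing finiteness of the second moment of $N(t)$, without which the quadratic variation could grow too quickly to conclude $M_k(t)/t\to 0$. The remaining parts are routine moment estimates and Borel--Cantelli, entirely in the spirit of Lemma~2.1 of \cite{10} which we may cite directly.
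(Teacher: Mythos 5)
The paper itself does not prove this lemma; it defers entirely to Lemmas 2.1 and 2.2 of \cite{10}, and your overall strategy is the same standard one used there. Part (a) of your sketch is sound: a uniform bound on $\mathbb{E}[(1+N(t))^p]$ for some $p>1$ (available with no restriction on the $\sigma_i$, since $p-1$ may be taken small), combined with a BDG/exponential-martingale control of the supremum over $[n,n+1]$ and Borel--Cantelli, gives $N(t)/t\to 0$ a.s., and $0\le X_k\le N$ finishes.

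The gap is in part (b), precisely at the step you yourself flag as the main obstacle. From $\sup_{t}\mathbb{E}[X_k^2(t)]\le C$ alone, Chebyshev applied to $\int_n^{n+1}X_k^2(s)\,ds$ gives $\mathbb{P}\bigl(\int_n^{n+1}X_k^2\,ds> a_n\bigr)\le C/a_n$, and summability forces $a_n\gtrsim n^{1+\epsilon}$; summing over $n\le t$ then yields only $\langle M_k\rangle_t=O(t^{2+\epsilon})$, which is on the wrong side of the threshold $\rho<2$ that you correctly identify as what the SLLN/Kronecker criterion requires. So the pathwise bound $\langle M_k\rangle_t=O(t)$, or even $o(t^\rho)$ with $\rho<2$, does not follow by the route you describe. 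There are two standard repairs. (i) Avoid pathwise bounds on $\langle M_k\rangle$ altogether: by Tonelli, $\mathbb{E}\int_0^\infty X_k^2(s)(1+s)^{-2}\,ds\le C\int_0^\infty(1+s)^{-2}\,ds<\infty$, hence $\int_0^\infty X_k^2(s)(1+s)^{-2}\,ds<\infty$ a.s., so the local martingale $\int_0^t(1+s)^{-1}X_k(s)\,dB_k(s)$ has a.s.\ finite quadratic variation at infinity and converges a.s., and Kronecker's lemma gives $M_k(t)/t\to 0$. (ii) Follow \cite{10} and push the sup-moment bound to $(1+N)^\theta$ with $\theta>2$ --- this is exactly where $\mu>\tfrac12\max_i\sigma_i^2$ enters, since one needs $\mu-\tfrac{\theta-1}{2}\max_i\sigma_i^2>0$ --- which yields the pathwise growth estimate $N(t)=O(t^{(1+\epsilon)/\theta})$ and hence $\langle M_k\rangle_t=O(t^{1+2(1+\epsilon)/\theta})$ with exponent strictly less than $2$. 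Either repair closes the argument; as written, your part (b) does not.
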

\begin{proof}
The proof of this Lemma \ref{ab} is similar in spirit to that of Lemmas 2.1 and 2.2 of \cite{10} and therefore it is omitted here. 
\end{proof}
\begin{definition}[Stochastic extinction \cite{9}]
For system \eqref{sys1}, the infected individuals $E(t)$ and $I(t)$ are said to be stochastically extinct, or extinctive, if $\lim\limits_{t\rightarrow \infty} (E(t) + I(t))  = 0$ almost surely.
\end{definition}

\begin{theorem}\label{thmm}
Let $(S(t), E(t), I(t), Q(t), R(t))$ be the solution of system \eqref{sys1} with any initial value $(S(0), E(0), I(0), Q(0), R(0))\in\mathbb{R}_+^5$.\\
If $\mu>\frac{1}{2}(\sigma_1^2\vee\sigma_2^2\vee\sigma_3^2\vee\sigma_4^2\vee\sigma_5^2)$ and $\sigma_1^2\wedge\sigma_3^2>4(\beta h'(0)\tilde{S}_0-\mu)$, with $\tilde{S}_0=\dfrac{b}{\eta+\mu+d_3},$ then
\begin{eqnarray*}
	\limsup\limits_{t\rightarrow\infty}\dfrac{\ln(E(t)+I(t))}{t}&\leq& \beta h'(0)\tilde{S}_0-\mu-\dfrac{\sigma_2^2\wedge\sigma_3^2}{4}<0~~\mbox{a.s,}
\end{eqnarray*}
which means that the disease will die out exponentially with probability one.
\end{theorem}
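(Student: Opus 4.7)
The natural object to track is $U(t)=\ln(E(t)+I(t))$, since the conclusion is a long-time bound on $\frac{\ln(E+I)}{t}$. I would apply It\^o's formula to $U$ using the $dE$ and $dI$ equations from \eqref{sys1}. After cancelling the $\pm\lambda E$ term that appears in both components of $dE+dI$, the drift becomes
\begin{equation*}
\frac{\beta Sh(I)-(\mu+d_2)E-(\mu+\nu+\gamma+d_1)I}{E+I}-\frac{1}{2}\,\frac{\sigma_2^2 E^2+\sigma_3^2 I^2}{(E+I)^2},
\end{equation*}
plus the local martingale $\int_0^t\frac{\sigma_2 E\,dB_2+\sigma_3 I\,dB_3}{E+I}$. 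The elementary inequality $E^2+I^2\geq\tfrac12(E+I)^2$ gives $-\frac{\sigma_2^2 E^2+\sigma_3^2 I^2}{2(E+I)^2}\leq-\frac{\sigma_2^2\wedge\sigma_3^2}{4}$, which is where the factor $4$ in the hypothesis originates. The hypothesis $h(I)/I\leq h'(0)$ yields $\beta Sh(I)/(E+I)\leq\beta h'(0)S$, while the negative drift terms dominate $-\mu$. Thus
\begin{equation*}
dU\leq\left[\beta h'(0)S(t)-\mu-\tfrac{\sigma_2^2\wedge\sigma_3^2}{4}\right]dt+dM(t),
\end{equation*}
with $M$ a continuous local martingale whose integrands are bounded by $1$.

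The next step is to replace the pathwise $S(t)$ by its ergodic ceiling $\tilde S_0$. Integrating the $dS$ equation in \eqref{sys1} over $[0,t]$, dividing by $t$, and rearranging yields
\begin{equation*}
(\eta+\mu+d_3)\cdot\frac{1}{t}\int_0^t S(s)\,ds=b-\frac{S(t)-S(0)}{t}-\frac{\beta}{t}\int_0^t S(s)h(I(s))\,ds+\frac{\sigma_1}{t}\int_0^t S(s)\,dB_1(s).
\end{equation*}
By Lemma \ref{ab}(a), $S(t)/t\to 0$ a.s., and by Lemma \ref{ab}(b) (applicable thanks to $\mu>\tfrac12\max_i\sigma_i^2$) the stochastic integral term also vanishes a.s. Since the $\beta Sh(I)$ integral is non-negative, we conclude $\limsup_{t\to\infty}\frac{1}{t}\int_0^t S(s)\,ds\leq\tilde S_0$ almost surely.

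Finally, I would handle the martingale $M(t)$ via the strong law of large numbers for local martingales: its quadratic variation satisfies $\langle M\rangle_t\leq(\sigma_2^2+\sigma_3^2)t$, so $M(t)/t\to 0$ a.s. Dividing the integrated inequality for $U$ by $t$ and passing to $\limsup$ then gives
\begin{equation*}
\limsup_{t\to\infty}\frac{\ln(E(t)+I(t))}{t}\leq\beta h'(0)\tilde S_0-\mu-\frac{\sigma_2^2\wedge\sigma_3^2}{4},
\end{equation*}
which is negative by the second standing hypothesis.

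\textbf{Anticipated obstacle.} The routine It\^o calculation is transparent; the delicate step is the a.s.\ control $\limsup\tfrac{1}{t}\int_0^t S(s)\,ds\leq\tilde S_0$, because it hinges on killing the $\sigma_1\int_0^t S\,dB_1/t$ contribution, for which the moment hypothesis $\mu>\tfrac12\max_i\sigma_i^2$ is exactly what Lemma \ref{ab}(b) requires. The other subtle point is the sharp lower bound $\frac{\sigma_2^2 E^2+\sigma_3^2 I^2}{(E+I)^2}\geq\frac{\sigma_2^2\wedge\sigma_3^2}{2}$; any cruder estimate would spoil the explicit $\sigma_2^2\wedge\sigma_3^2/4$ in the conclusion.
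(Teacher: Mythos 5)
Your proposal follows essentially the same route as the paper: It\^o's formula applied to $\ln(E+I)$, the bounds $h(I)/I\le h'(0)$ and $(E+I)^2\le 2(E^2+I^2)$ to obtain the drift ceiling $\beta h'(0)S-\mu-\tfrac{\sigma_2^2\wedge\sigma_3^2}{4}$, and control of $\langle S(t)\rangle$ by integrating the first equation and invoking Lemma \ref{ab}. The only (harmless) difference is that you make explicit the strong law of large numbers for the bounded-integrand martingale $\int_0^t\frac{\sigma_2E\,dB_2+\sigma_3I\,dB_3}{E+I}$, which the paper leaves implicit.
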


\begin{proof}
From It\^o's formula
\begin{eqnarray*}
	d\ln(E+I)&=&\bigg[\dfrac{1}{E+I}\bigg(\beta Sh(I)-d_2 E-(\nu+\gamma+d_1)I\bigg)-\mu-\dfrac{\sigma_2^2E^2+\sigma_3^2 I^2}{2(E+I)^2}\bigg]dt\\&&+ \sigma_2\dfrac{E}{E+I}dB_2+\sigma_3\dfrac{I}{E+I}dB_3.
\end{eqnarray*}
Thus
\begin{eqnarray*}
	d\ln(E+I)&\leq& \bigg[\beta h'(0)S-\mu-\dfrac{\sigma_2^2\wedge\sigma_2^3}{2}\dfrac{E^2+I^2}{(E+I)^2}\bigg]dt+\sigma_2\dfrac{E}{E+I}dB_2+\sigma_3\dfrac{I}{E+I}dB_3.
\end{eqnarray*}
Using the inequality $(a+b)^2\leq 2a^2+2b^2$ for all $a,b\in\mathbb{R},$ we get
\begin{eqnarray}\label{stoex}
	d\ln(E+I)&\leq & \bigg[\beta h'(0)S-\mu-\dfrac{\sigma_2^2\wedge\sigma_2^3}{4}\bigg]dt+\sigma_2\dfrac{E}{E+I}dB_2+\sigma_3\dfrac{I}{E+I}dB_3.
\end{eqnarray}
Integrating \eqref{stoex} from $0$ to $t$, and then dividing by $t$ on both sides, we obtain
\begin{eqnarray}\label{rtt}
\dfrac{\ln(E(t)+I(t))}{t}&\leq&\dfrac{\ln(E(0)+I(0))}{t} +\beta h'(0)\langle S(t)\rangle-\mu-\dfrac{\sigma_2^2\wedge\sigma_2^3}{4}\nonumber\\&&+\dfrac{\sigma_2}{t}\displaystyle\int_{0}^{t}\dfrac{E(s)}{E(s)+I(s)}dB_2(s)+\dfrac{\sigma_3}{t}\displaystyle\int_{0}^{t}\dfrac{I(s)}{E(s)+I(s)}dB_3(s).
\end{eqnarray}
On the other hand, the first equation of \eqref{sys1} gives
\begin{eqnarray*}
S(t)-S(0)&=& bt-\beta\displaystyle\int_{0}^{t} S(s)h(I(s)) ds-(\eta+\mu+d_3)\displaystyle\int_{0}^{t}S(s)ds+\sigma_1\displaystyle\int_{0}^{t}S(s)dB_1(s)\\
&\leq & bt-(\eta+\mu+d_3)\displaystyle\int_{0}^{t}S(s)ds+\sigma_1\displaystyle\int_{0}^{t}S(s)dB_1(s).
\end{eqnarray*}
Therefore
\begin{eqnarray}\label{poly}
\langle S(t)\rangle&\leq&\dfrac{1}{\eta+\mu+d_3}\bigg[b+\dfrac{S(0)-S(t)}{t}+\dfrac{\sigma_1}{t}\displaystyle\int_{0}^{t}S(s)dB_1(s)\bigg].
\end{eqnarray}
Since $\mu>\frac{1}{2}(\sigma_1^2\vee\sigma_2^2\vee\sigma_3^2\vee\sigma_4^2\vee\sigma_5^2)$, we can conclude by virtue of Lemma \ref{ab} and inequality \eqref{poly} that

\begin{eqnarray}\label{ttr}
\lim\limits_{t\rightarrow\infty}\langle S(t)\rangle &\leq&\dfrac{b}{\eta+\mu+d_3}=S_0.
\end{eqnarray}
From \eqref{rtt} and  \eqref{ttr}, we get
\begin{eqnarray*}
\limsup\limits_{t\rightarrow\infty}\dfrac{\ln(E(t)+I(t))}{t}&\leq& \beta h'(0)\tilde{S}_0-\mu-\dfrac{\sigma_2^2\wedge\sigma_3^2}{4}<0~~\mbox{a.s,}
\end{eqnarray*}
which completes the proof.
\end{proof}

\begin{corollary}
Under the same notations and hypotheses as in Theorem \ref{thmm}, we have\\
$$\lim\limits_{t\rightarrow\infty}\langle S(t) \rangle= \dfrac{b}{\eta+\mu+d_3} ~~\mbox{a.s.,}$$ $$ \lim\limits_{t\rightarrow\infty}\langle Q(t) \rangle= \dfrac{bd_3}{(\mu+\tau)(\eta+\mu+d_3)} ~~\mbox{a.s.,}$$ $$\lim\limits_{t\rightarrow\infty}\langle R(t) \rangle= \dfrac{b[\eta(\mu+\tau)+\tau d_3]}{\mu(\eta+\mu+d_3)(\mu+\tau)} ~~\mbox{a.s.} $$
\end{corollary}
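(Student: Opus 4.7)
The plan is to exploit the exponential extinction $E(t)+I(t)\to 0$ a.s. given by Theorem \ref{thmm} together with Lemma \ref{ab} to turn the SDEs themselves, integrated and divided by $t$, into linear relations between the time averages $\langle S\rangle,\langle E\rangle,\langle I\rangle,\langle Q\rangle,\langle R\rangle$. Since the extinction of $E+I$ forces both $E(t)\to 0$ and $I(t)\to 0$ a.s., a standard Ces\`aro-type argument (for any $f(t)\to 0$, $\frac{1}{t}\int_0^t f(s)\,ds\to 0$) yields $\langle E(t)\rangle\to 0$ and $\langle I(t)\rangle\to 0$ almost surely; this is the key input.

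First I would treat $\langle S\rangle$. The natural approach is to add the $S$ and $E$ equations of \eqref{sys1} so that the nonlinear incidence term $\beta S h(I)$ cancels, giving
\begin{eqnarray*}
d(S+E)=\bigl[b-(\eta+\mu+d_3)S-(\lambda+\mu+d_2)E\bigr]dt+\sigma_1 S\,dB_1+\sigma_2 E\,dB_2.
\end{eqnarray*}
Integrating from $0$ to $t$ and dividing by $t$, Lemma \ref{ab}(a) kills the boundary term $(S(t)+E(t)-S(0)-E(0))/t$, Lemma \ref{ab}(b) kills the two stochastic integrals, and $\langle E\rangle\to 0$ by the remark above; this leaves $(\eta+\mu+d_3)\langle S\rangle\to b$, yielding the first limit. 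Note this also supplies the reverse inequality to \eqref{ttr} that Theorem \ref{thmm} only needed in one direction.

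Then $\langle Q\rangle$ and $\langle R\rangle$ follow in sequence by the same device applied directly to the fourth and fifth equations of \eqref{sys1}. Integrating the $Q$-equation, dividing by $t$, and passing to the limit using Lemma \ref{ab}, $\langle E\rangle,\langle I\rangle\to 0$, and the value of $\langle S\rangle$ just obtained gives $(\mu+\tau)\langle Q\rangle\to \tfrac{b d_3}{\eta+\mu+d_3}$. The same procedure on the $R$-equation gives $\mu\langle R\rangle\to \eta\langle S\rangle_\infty+\tau\langle Q\rangle_\infty$, and algebraic simplification reproduces the stated value. The only non-routine step is the justification that $\langle E\rangle,\langle I\rangle\to 0$; I expect this to be the main (albeit mild) obstacle, since one must appeal to the exponential decay of $E+I$ from Theorem \ref{thmm} rather than mere boundedness, but the $1/t$ factor combined with pointwise convergence to zero makes this immediate.
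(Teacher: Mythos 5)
Your proposal is correct and follows essentially the same route as the paper: summing the $S$ and $E$ equations to cancel the incidence term, invoking Lemma \ref{ab} for the boundary and martingale terms, using the extinction from Theorem \ref{thmm} together with the continuous Ces\`aro theorem to get $\langle E\rangle,\langle I\rangle\to 0$, and then treating the $Q$ and $R$ equations in sequence. No substantive differences to report.
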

\begin{proof}
We have
\begin{eqnarray}\label{co}
d(S(t)+E(t))=[b-(\eta+\mu+d_3)S(t)-(\lambda+\mu+d_2)E(t)]dt+\sigma_1 S(t)dB_1+\sigma_2 E(t)dB_2.
\end{eqnarray}
Integrating \eqref{co} from $0$ to $t$, and then dividing by $t$ on both sides yields

\begin{eqnarray*}
\dfrac{S(t)+E(t)}{t}-\dfrac{S(0)+E(0)}{t}=b-(\eta+\mu+d_3)\langle S(t) \rangle-(\lambda+\mu+d_2)\langle E(t) \rangle+\dfrac{\sigma_1 }{t} \displaystyle\int_{0}^tS(t)dB_1+\dfrac{\sigma_2}{t}\displaystyle\int_{0}^t E(t)dB_2.
\end{eqnarray*}
Hence
\begin{eqnarray}\label{op}
(\eta+\mu+d_3)\langle S(t) \rangle=b+\dfrac{S(0)+E(0)}{t}-\dfrac{S(t)+E(t)}{t}-(\lambda+\mu+d_2)\langle E(t) \rangle+\dfrac{\sigma_1 }{t} \displaystyle\int_{0}^tS(t)dB_1+\dfrac{\sigma_2}{t}\displaystyle\int_{0}^t E(t)dB_2.
\end{eqnarray}
Letting $t$ go to infinity on both sides of \eqref{op}, and using  Lemma \ref{ab}, we obtain
\begin{eqnarray*}
\lim\limits_{t\rightarrow\infty}(\eta+\mu+d_3)\langle S(t) \rangle=b-(\lambda+\mu+d_2)\lim\limits_{t\rightarrow\infty}\langle E(t)\rangle~~\mbox{a.s.}.
\end{eqnarray*}
In virtue of Theorem \eqref{thmm}, we have
$$\lim\limits_{t\rightarrow\infty} E(t)=0~~\mbox{a.s.},$$
which implies by the continuous version of Cesàro’s theorem \cite{15} that
$$\lim\limits_{t\rightarrow\infty}\langle E(t)\rangle=0~~\mbox{a.s.}.$$
Therefore
$$\lim\limits_{t\rightarrow\infty}\langle S(t)\rangle =\dfrac{b}{\eta+\mu+d_3}~~\mbox{a.s.}.$$
At the same time, we have
\begin{eqnarray*}
dQ(t)=[d_3 S(t)+d_2 E(t)+d_1 I(t)-(\mu+\tau)Q(t)]dt+\sigma_4 Q(t) dB_4(t).
\end{eqnarray*}
Then 
\begin{eqnarray*}
\dfrac{Q(t)-Q(0)}{t}=d_3 \langle S(t) \rangle+d_2\langle E(t) \rangle+d_1 \langle I(t) \rangle-(\mu+\tau)\langle Q(t) \rangle+\dfrac{\sigma_4}{t}\displaystyle\int_{0}^t Q(t)dB_4.
\end{eqnarray*}
Hence
\begin{eqnarray*}
\langle Q(t) \rangle=\dfrac{d_3}{\mu+\tau}\langle S(t) \rangle+\dfrac{d_2}{\mu+\tau}\langle E(t) \rangle+\dfrac{d_1}{\mu+\tau}\langle I(t) \rangle+\dfrac{Q(0)-Q(t)}{t(\mu+\tau)}+\dfrac{\sigma_4}{t(\mu+\tau)}\displaystyle\int_{0}^t Q(t)dB_4.
\end{eqnarray*}
By passage to the limit similar to the above, we get
$$\lim\limits_{t\rightarrow\infty}\langle Q(t)\rangle =\dfrac{bd_3}{(\mu+\tau)(\eta+\mu+d_3)}~~\mbox{a.s.}.$$
The same reasoning 
$$\lim\limits_{t\rightarrow\infty}\langle R(t)\rangle =\dfrac{b[\eta(\mu+\tau)+\tau d_3]}{\mu(\eta+\mu+d_3)(\mu+\tau)}~~\mbox{a.s.}.$$
\end{proof}

\section{$V-$geometric ergodicity}

\begin{definition}[ Stationary distribution\cite{11}]
 Denote $\mathbb{P}_{\delta}$ the corresponding probability distribution of an initial distribution $\delta$ , which describes the initial state of  at $t = 0$. Suppose that the distribution of $X(t)$ with initial distribution $\delta $converges in some sense to a distribution $\pi=\pi_{\delta}$ (a priori $\pi$ may depend on the initial distribution $\delta$ ), i.e.,
 \begin{eqnarray*}
 	\lim\limits_{t \rightarrow +\infty}\mathbb{P}_{\delta}\{X(t)\in \mathcal{A}\}=\pi(\mathcal{A}),~~\mbox{for all measurable } \mathcal{A}.
 \end{eqnarray*}
Then, we say that  has a stationary distribution $\pi(.)$.
\end{definition}

\begin{lemma}\label{kko}\cite{12,13} Assume that the following assumptions hold: \\
\textbf{(P$1$)} (Minorization condition) For a compact set $D_1 \subset \mathbb{R}_{+}^5$, there exist $T, \alpha>0$ and a probability measure $v$ on $\mathbb{R}_{+}^5$ with $v\left(U_1\right.$ )
$>0$ such that
$$
P_T\left(X_0,\mathcal{A} \right) \geq \alpha v(A), \forall X_0 \in U_1, \forall \mathcal{A} \in \mathcal{B}\left(\mathbb{R}_{+}^5\right)
$$
\textbf{(P$2$)} (Lyapunov condition) There is a function $V: \mathbb{R}_{+}^5 \rightarrow[1, \infty)$ with $\lim _{|X(t)| \rightarrow \infty} V(X)=\infty$ and real numbers $\beta_1, \beta_2 \in(0, \infty)$ such that
$$
\mathcal{L}V(X) \leq \beta_1-\beta_2 V(X)
$$
Then the Markov process $X(t)$ is $V$-geometrically ergodic: there exists a unique stationary distribution $\pi$ such that, for some constants $C,\Lambda>0$
$$
|\mathbb{E} g(X(t))-\pi(g)| \leq C V\left(X_0\right) e^{-\Lambda t}, \forall X(0)=X_0 \in \mathbb{R}_{+}^5
$$
for all measurable function $g \in \mathcal{G}:=\left\{\right.$ measurable $g: \mathbb{R}_{+}^5 \rightarrow \mathbb{R}^5$ with $\left.|g(X)| \leq V(X)\right\}$.
\end{lemma}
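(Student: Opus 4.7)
The plan is to apply a standard Harris-type / Foster–Lyapunov theorem, converting the two assumptions into the abstract drift-and-minorization framework that yields $V$-geometric ergodicity. First I would exploit (P2) via Dynkin's formula: applying $\mathcal{L}$ to $V$ along the solution gives $d\mathbb{E}[V(X(t))]/dt \le \beta_1 - \beta_2 \mathbb{E}[V(X(t))]$, and Gronwall's inequality yields the uniform moment bound
$$\mathbb{E}[V(X(t))] \le V(X_0)e^{-\beta_2 t} + \frac{\beta_1}{\beta_2}.$$
From this, any sublevel set $\{V \le R\}$ with $R > \beta_1/\beta_2$ is visited infinitely often with finite expected return time, and the process is non-explosive and tight. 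Choosing $R$ large enough that $\{V \le R\}$ sits inside (or is handled uniformly with) the compact set $D_1$ appearing in (P1) connects the two hypotheses at a common time horizon $T$.

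Next I would pass to the discrete skeleton $\{X(nT)\}_{n \ge 0}$. Iterating the drift inequality to time $T$ yields $P_T V(X) \le e^{-\beta_2 T} V(X) + \beta_1/\beta_2$, which, after splitting according to whether $V(X)$ exceeds or falls below $R$, produces the classical one-step geometric drift
$$P_T V(X) \le \gamma\, V(X) + K\, \mathbf{1}_{D_1}(X), \qquad \gamma \in (0,1),\ K < \infty,$$
while (P1) furnishes the small-set minorization on $D_1$ at the same time step $T$. The abstract Harris theorem (in the form of Hairer–Mattingly, or equivalently Theorems 15.0.1 and 16.0.1 of Meyn–Tweedie) then gives a unique $P_T$-invariant probability measure $\pi$ together with a $V$-weighted contraction: there exist $C_0 > 0$ and $\Lambda_0 \in (0,1)$ such that
$$\sup_{|g| \le V}\bigl|\mathbb{E} g(X(nT)) - \pi(g)\bigr| \le C_0\, V(X_0)\, \Lambda_0^n$$
for every $X_0 \in \mathbb{R}_+^5$.

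Finally, I would lift this bound to continuous time. Writing $t = nT + r$ with $0 \le r < T$, the semigroup identity $\mathbb{E}g(X(t)) = \mathbb{E}\bigl[(P_{nT}g)(X(r))\bigr]$, combined with the discrete contraction and the a priori $V$-moment bound applied at time $r$, produces
$$\bigl|\mathbb{E} g(X(t)) - \pi(g)\bigr| \le C\, V(X_0)\, e^{-\Lambda t}, \qquad \Lambda = -\frac{\log \Lambda_0}{T},$$
after re-absorbing the bounded factor $\sup_{0 \le r < T} \mathbb{E}[V(X(r))]/V(X_0)$ into the constant $C$. A standard continuity-in-time argument then shows that $\pi$ is invariant for the full continuous-time semigroup, not only for the skeleton. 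The main obstacle is the quantitative matching of parameters in the middle step: one must select $T$ sufficiently large and $R$ large enough so that the effective coefficient $\gamma = e^{-\beta_2 T} + (\beta_1/\beta_2)/(R+1)$ is strictly below one on $\{V > R\}$, while simultaneously ensuring $\{V \le R\} \subseteq D_1$ so that the minorization from (P1) applies exactly where the drift contributes the constant term — this compatibility between the geometry of $V$ and the location of the small set is the only nontrivial point, since once it is secured the remainder follows by classical machinery.
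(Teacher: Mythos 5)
This lemma is quoted in the paper from \cite{12,13} without proof, so there is no in-paper argument to compare against; your proposal reconstructs exactly the standard Foster--Lyapunov/Harris route used in those references (Dynkin plus Gronwall to get the $V$-moment bound, passage to the skeleton chain $\{X(nT)\}$, the one-step geometric drift $P_TV\leq \gamma V+K\mathbf{1}_{D_1}$ combined with the minorization to invoke the abstract Harris theorem, then interpolation back to continuous time), and the outline is sound. One directional slip worth fixing: you write that one should choose ``$R$ large enough that $\{V\leq R\}$ sits inside $D_1$,'' but enlarging $R$ enlarges the sublevel set, so this cannot be arranged against a fixed compact $D_1$; what is actually needed is that the minorization of (P$1$) holds on the (large) sublevel set $\{V\leq R\}$ dictated by the drift constants, i.e.\ $D_1\supseteq\{V\leq R\}$. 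In the paper's application this is harmless because the minorization is verified there on arbitrary balls $\mathcal{B}_{\omega}(0)$ via the strictly positive transition density, so the small set can always be taken large enough; you correctly identify this compatibility as the only nontrivial point, and with the direction corrected the rest is classical machinery.
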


The following result shows the existence of $V-$geometric ergodicity of the Markov process $X(t) = (S(t), E(t),I(t),Q(t), R(t))$ of model \eqref{sys1}.
\begin{theorem}
	Markov process $X(t)$ of model \eqref{sys1} with initial value $X_0 \in\mathbb{R}_+^5$ is $V-$geometrically ergodic.
\end{theorem}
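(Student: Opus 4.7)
The plan is to apply Lemma \ref{kko}, so I need to verify the Lyapunov drift condition (P2) and the minorization condition (P1).

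For (P2) I would try $V(X) = (1+N)^{\theta}$ with $N = S+E+I+Q+R$ and $\theta \in (0,1)$ a small parameter. The bounds $|X| \leq N \leq \sqrt{5}\,|X|$ on $\mathbb{R}_+^5$ guarantee $V \geq 1$ and $V(X)\to\infty$ as $|X|\to\infty$. Summing the five SDEs in \eqref{sys1}, the telescoping of the cross-compartment transfer terms leaves
\[
dN = [\,b - \mu N - \nu I\,]\,dt + \sum_{i=1}^{5}\sigma_i X_i\,dB_i(t),
\]
so It\^o's formula yields $\mathcal{L}V = \theta(1+N)^{\theta-1}(b-\mu N - \nu I) + \tfrac{1}{2}\theta(\theta-1)(1+N)^{\theta-2}\sum_i \sigma_i^2 X_i^2$. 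Dropping the non-positive Hessian contribution (since $\theta<1$) and the $-\nu I$ term, then using the identity $N(1+N)^{\theta-1} = (1+N)^\theta - (1+N)^{\theta-1}$ together with the bound $(1+N)^{\theta-1}\leq 1$, I expect the clean inequality
\[
\mathcal{L}V \leq \theta(b+\mu) - \theta\mu\, V,
\]
which is exactly (P2) with $\beta_1 = \theta(b+\mu)$ and $\beta_2 = \theta\mu$.

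The delicate part is (P1). The diffusion matrix is $\mathrm{diag}(\sigma_1^2 S^2,\ldots,\sigma_5^2 R^2)$, which is uniformly positive definite on any compact set $D_1 \subset (0,\infty)^5$ but degenerates on $\partial\mathbb{R}_+^5$. I would fix $D_1$ as an annulus $\{X\in\mathbb{R}_+^5 : \varepsilon \leq S,E,I,Q,R \ \text{and}\ N \leq M\}$ contained in a sublevel set $\{V\leq C\}$, with $\varepsilon$ small and $M,C$ large. On $D_1$ the generator is uniformly elliptic with smooth bounded coefficients, so classical parabolic regularity theory (equivalently, the Stroock--Varadhan support theorem combined with a Feller/irreducibility argument, as used in \cite{12,13} behind Lemma \ref{kko}) produces a jointly continuous and strictly positive transition density $p_T(X_0,y)$ on $D_1 \times D_1$. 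Compactness then forces $\inf_{D_1\times D_1} p_T > 0$, yielding the required minorization with $v$ the normalized Lebesgue measure on $D_1$ and $\alpha$ proportional to $\mathrm{Leb}(D_1)\cdot \inf_{D_1\times D_1} p_T$.

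Combining (P1) and (P2), Lemma \ref{kko} delivers the unique invariant measure $\pi$ together with the $V$-geometric convergence $|\mathbb{E}g(X(t)) - \pi(g)| \leq C V(X_0) e^{-\Lambda t}$ for all $g$ with $|g|\leq V$. The main obstacle is (P1): because the multiplicative noise degenerates on $\partial\mathbb{R}_+^5$, one cannot quote a uniform ellipticity result on the full state space, so some care is required to localise to the interior and invoke the appropriate H\"ormander/support-theorem machinery in order to extract the small-set minorization that the Foster--Lyapunov framework demands.
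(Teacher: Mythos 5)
Your proposal follows the same overall strategy as the paper—verify the Foster--Lyapunov drift condition (P2) and the minorization condition (P1) and invoke Lemma \ref{kko}—but differs in both ingredients. For (P2) the paper takes $V(X)=N+\tfrac{1}{N}$ and obtains $\mathcal{L}V\leq \varpi-\mu V$ by completing a square; your choice $V(X)=(1+N)^{\theta}$ with $\theta\in(0,1)$ is equally valid, and your computation is correct: the Hessian term carries the sign of $\theta-1$ and can be discarded, and the identity $N(1+N)^{\theta-1}=(1+N)^{\theta}-(1+N)^{\theta-1}$ together with $(1+N)^{\theta-1}\leq 1$ gives $\mathcal{L}V\leq\theta(b+\mu)-\theta\mu V$ cleanly. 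What the paper's $N+\tfrac{1}{N}$ buys in exchange for the messier algebra is that the same $V$ is reused in the permanence theorem (its sublevel sets bound $N$ away from $0$ as well as from $\infty$), whereas your $(1+N)^{\theta}$ only controls largeness. For (P1), the paper simply asserts that the model is uniformly elliptic and quotes Proposition 11.1 of \cite{14} to produce a jointly continuous, strictly positive transition density on all of $\mathbb{R}_+^5$; you correctly observe that the diffusion matrix $\mathrm{diag}(\sigma_1^2S^2,\dots,\sigma_5^2R^2)$ degenerates on $\partial\mathbb{R}_+^5$, so ellipticity is only locally uniform on interior compacts, and you localise accordingly. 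Your treatment here is the more careful one. The one point you should still address (and which the paper also glosses over) is the compatibility between the two conditions: in the Meyn--Tweedie/Mattingly--Stuart--Higham framework the small set of (P1) must essentially be a sublevel set $\{V\leq C\}$ of the Lyapunov function, and for both your $V$ and the paper's the sublevel sets contain points with some coordinate equal to zero, where the interior-ellipticity argument does not directly apply. Closing this either requires a Lyapunov function that blows up as any single coordinate tends to $0$ (e.g.\ one involving $\sum_i X_i^{-1}$ or $-\sum_i\ln X_i$), so that its sublevel sets are interior compacts of the type you construct, or an auxiliary argument that from any point of $\{V\leq C\}$ the process reaches your interior annulus $D_1$ by time $T$ with probability bounded below.
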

\begin{proof}
	Summing up the five equations in \eqref{sys1} and denoting $N(t) = S(t) + E(t)+ I(t) + Q(t)+R(t)$, we obtain for all $t\geq 0$
	\begin{eqnarray*}
		dN(t)&=&(b-\mu N(t)-\nu I(t))dt+\sigma_1 S(t)dB_1(t)+\sigma_2 E(t)dB_2(t)+\sigma_3 I(t)dB_3(t)+\sigma_4 Q(t)dB_4(t)\\&&+\sigma_5 R(t)dB_5(t).
	\end{eqnarray*}
	Define a $\mathcal{C}^2-$function $V(X(t))=N+\dfrac{1}{N}.$\\
	For $X(t)\in \mathbb{R}^5_+$, it follows that $V(X(t))\rightarrow \infty$ as $\vert X(t)\vert \rightarrow \infty$. By It\^o's formula, we have
	\begin{eqnarray*}
		dV(X(t))&=& \mathcal{L}V(X(t))dt+\bigg(1-\dfrac{1}{N^2(t)}\bigg)[\sigma_1 S(t)dB_1(t)+\sigma_2 E(t)dB_2(t)+\sigma_3 I(t)dB_3(t)\\&&+\sigma_4 Q(t)dB_4(t)+\sigma_5 R(t)dB_5(t)],
	\end{eqnarray*}
	where $\mathcal{L}V(X)$ is given by
	\begin{eqnarray}\label{izne}
		\mathcal{L}V(X)&=&(b-\mu N-\nu I)-\dfrac{b-\mu N-\nu I}{N^2}+\dfrac{1}{N^3}(\sigma_1^2S^2+\sigma_2^2E^2+\sigma_3^2I^2+\sigma_4^2Q^2+\sigma_5^2R^2)\nonumber\\ \nonumber
		&\leq & (b-\mu N)-\dfrac{b}{N^2}+\dfrac{\mu+\nu}{N}+\dfrac{1}{N}\sum\limits_{i=1}^{5}\sigma_i^2\\ \nonumber
		&\leq &  -\mu(N+\dfrac{1}{N})+b-\dfrac{b}{N^2}+\dfrac{1}{N}\bigg(2\mu+\nu+\sum\limits_{i=1}^{5}\sigma_i^2\bigg)\\ \nonumber
		&\leq & -\mu (N+\dfrac{1}{N})-\bigg(\dfrac{\sqrt{b}}{N}-\dfrac{1}{2\sqrt{b}}\bigg(2\mu+\nu+\sum\limits_{i=1}^{5}\sigma_i^2\bigg)\bigg)^2+\dfrac{1}{4b}\bigg(2\mu+\nu+\sum\limits_{i=1}^{5}\sigma_i^2\bigg)^2+b\\
		&\leq & \varpi-\mu V(X), 
	\end{eqnarray}	
	where \\$\varpi=\dfrac{1}{4b}\bigg(2\mu+\nu+\sum\limits_{i=1}^{5}\sigma_i^2\bigg)^2+b$.
	\\\\
Hence, in Lemma \ref{kko} condition \textbf{(P2)} holds. Since model \eqref{sys1} is uniformly elliptic, according to Proposition 11.1 in \cite{14} we can find a function $g:\mathbb{R}_+\times\mathbb{R}_+^5\times\mathbb{R}_+^5\rightarrow (0,\infty)$ such that $g$ is jointly continuous and  $g_t(X_0, Y)$ is strictly positive for all $(t, X_0, Y)$, such that for all measure sets $\mathcal{A}$ 
\begin{eqnarray*}
P_t(X_0,\mathcal{A})=\displaystyle\int_{\mathcal{A}}g_t(X_0,Y) dY.	
\end{eqnarray*}
It follows that for any $\omega > 0$, there exists a positive constant $a = a(\omega,t) > 0$ so that $\inf\{P_t(X_0,Y) : X_0,~Y\in \mathbb{R}^5_+,~\lvert X_0\rvert,\lvert Y\rvert \leq \omega\} \geq a.$ The hypothesis \textbf{(P1)} implies the argument, as it known that for any measurable set $A$ we have
\begin{eqnarray*}
P_t(X_0,\mathcal{A})=\displaystyle\int_{\mathcal{A}}g_t(X_0,Y) dY\geq a~\mbox{Leb}(\mathcal{A}\cap \mathcal{B}_{\omega}(0))=a\mbox{Leb}(\mathcal{B}_{\omega}(0))v(\mathcal{A}),
\end{eqnarray*}
where Leb is Lebesgue measure and $v(\mathcal{A}) = Leb (\mathcal{A} \cup(\mathcal{B}_{\omega}(0))/Leb (\mathcal{B}_{\omega}(0))$. Hence we obtain the required result.
\end{proof}

\section{Numerical Simulations}\label{sec:numerics}

In this section we present detailed numerical illustrations of the dynamics of a generalized stochastic SEIQR epidemic model, emphasizing the effectiveness of numerical solutions for stochastic differential equations (SDEs). Through a series of simulations, we demonstrate the capability of our approach to capture the intricate behaviors of epidemic models influenced by stochastic processes, particularly Brownian motions, which serve to represent the inherent randomness associated with the spread of infectious diseases.

To address the stochastic nature of the SEIQR model, we use the truncated Milstein method as our primary numerical discretization technique for solving the governing SDEs. We consider several illustrative examples to showcase the robustness of our methodology in modeling epidemic dynamics under uncertainty.

\subsection{Time Discretization}
Roughly speaking, we are given a diffusion process:
\begin{equation}\label{eq:diff_proc}
dX_t = \alpha(X_t)dt + \beta(X_t)dB_t,
\end{equation}
where $X_0=x_0\in\mathbb{R}^d$ is given, $\alpha:\mathbb{R}^d\rightarrow\mathbb{R}^d$, $\beta:\mathbb{R}^d\rightarrow\mathbb{R}^{d\times d}$ and $\{B_t\}_{t\geq 0}$ is a
standard $d-$dimensional Brownian motion.

Typically one must time discretize \eqref{eq:diff_proc} and we consider a time discretization at equally spaced times, separated by $\Delta_l=2^{-l}$. To continue with our exposition, we define the $d-$vector, $F:\mathbb{R}^{2d}\times\mathbb{R}^+\rightarrow\mathbb{R}^d$, 
$F_{\Delta}(x,z)=(F_{\Delta,1}(x,z),\dots,F_{\Delta,d}(x,z))^{\top}$ where for $i\in\{1,\dots,d\}$
\begin{eqnarray*}
F_{\Delta,i}(x,z) & = &  \sum_{(j,k)\in\{1,\dots,d\}^2} f_{ijk}(x)(z_jz_k-\Delta) \\
f_{ijk}(x) & = & \frac{1}{2}\sum_{m\in\{1,\dots,d\}} \beta_{mk}(x)\frac{\partial \beta_{ij}(x)}{\partial x_m}.
\end{eqnarray*}
Here, $\mathcal{N}_d(\mu,\Sigma)$ denotes the $d$-dimensional Gaussian distribution with mean vector $\mu$ and covariance matrix $\Sigma$; if $d=1$, we drop the subscript $d$. $I_d$ represents the $d\times d$ identity matrix.
A single-level implementation of the truncated Milstein scheme \cite{Ajay1, Ajay2, Milstein}, a numerical method that is a key focus of this article, is detailed in Algorithm\ref{alg:milstein_sl}.

\begin{algorithm}[H]
\begin{enumerate}
\item{Input level $l$ and starting point $x_0^l$.}
\item{Generate $Z_k\stackrel{\textrm{i.i.d.}}{\sim}\mathcal{N}_d(0,\Delta_l I_d)$, $k\in\{1,2,\dots,\Delta_l^{-1}\}$.}
\item{Generate level $l$: for $k\in\{0,1,\dots,\Delta_l^{-1}-1\}$ with $X_0^l=x_0^l$
$$
X_{(k+1)\Delta_l}^l = X_{k\Delta_l}^l + \alpha(X_{k\Delta_l}^l)\Delta_l + \beta(X_{k\Delta_l}^l)Z_{k+1} + F_{\Delta_l}(X_{k\Delta_l}^l,Z_{k+1}).
$$
}
\item{Output $X_1^l$.}
\end{enumerate}
\caption{Truncated Milstein Scheme on $[0,1]$.}
\label{alg:milstein_sl}
\end{algorithm}

\subsection{Set-Up}

We consider the SEIQR model, where the state vector \( X_t = (S_t, E_t, I_t, Q_t, R_t) \in \mathbb{R}_{+}^5 \) represents the proportions of susceptible, exposed, infectious, quarantined, and recovered individuals at time \( t \in [0,T] \), respectively. The model is designed to capture the dynamics of disease transmission and progression within a population over a specified time period.

The transmission function \( h(x) \) is defined as a fractional positive increasing function on the domain \( [0, \infty) \), formulated as \( h(x) = \frac{x}{1 + 0.01x} \). This function ensures that as the value of \( x \) increases, \( h(x) \) asymptotically approaches a linear growth, while still remaining bounded. Notably, \( h(0) = 0 \), which implies that in the absence of an infectious agent, the transmission rate remains zero.

For the initial conditions, we assume that the proportions of the population in each compartment at \( t = 0 \) are equal, with \( S_0 = E_0 = I_0 = Q_0 = R_0 = 0.25 \). This choice of initial conditions ensures a balanced starting point for the simulation, allowing us to observe the natural evolution of the system under the specified model parameters.

The SEIQR system parameters are chosen as follows: \( \mu = 0.55 \), \( b = 1 \), \( \eta = 0.01 \), \( d_3 = 0.01 \), \( \beta = 0.1 \), \( \lambda = 0.1 \), \( d_2 = 0.03 \), \( \nu = 0.01 \), \( \gamma = 0.05 \), \( d_1 = 0.1 \), and \( \tau = 0.05 \). 

For numerical simulations, we employ a discretization approach with a level \( 9 \) refinement, corresponding to a time step size of \( \Delta_l = 2^{-9} = 0.002 \). This fine discretization ensures that the model's dynamics are captured with high accuracy, allowing for a detailed analysis of the temporal evolution of the system.

\subsection{Numerical Results}

Our analysis begins with simulating the stochastic SEIQR system using MATLAB software. We set the diffusion coefficients to \(\sigma_1 = \sigma_4 = \sigma_5 = 0.25\) and \(\sigma_2 = \sigma_3 = 1\), which introduces different levels of uncertainty into each compartment of the model. By including these stochastic elements, we aim to see how random fluctuations might affect the long-term behavior of the system.

To ensure our model is working correctly, we first verify that the parameters meet key theoretical requirements. For example, we confirm that the drift parameter \(\mu = 0.55\) is larger than half the maximum of the squared diffusion coefficients, which is \(0.5\). We also check that \(\sigma_1^2 \wedge \sigma_3^2 = 0.0625\) is greater than $4(\beta h'(0) \tilde{S}_0 - \mu) = -1.4982,$ where \(\tilde{S}_0 = 1.7544\) represents the average number of susceptible individuals in the long run. Our analysis of the system’s behavior over time reveals some interesting results. As time goes on, the average number of susceptible individuals stabilizes at:

\[
\lim_{t \rightarrow \infty} \langle S(t) \rangle = \frac{b}{\eta + \mu + d_3} = 1.7544.
\]

Similarly, the long-term averages for the quarantined and recovered populations are:

\[
\lim_{t \rightarrow \infty} \langle Q(t) \rangle = \frac{b d_3}{(\mu + \tau)(\eta + \mu + d_3)} = 0.0292
\]

and

\[
\lim_{t \rightarrow \infty} \langle R(t) \rangle = \frac{b[\eta(\mu+\tau) + \tau d_3]}{\mu(\eta + \mu + d_3)(\mu + \tau)} = 0.0346.
\]

Figure \ref{fig: Extinction} displays the trajectories of the SEIQR model over the time interval from $0$ to $100$, highlighting the model's behavior under the specified synthetic parameters. Each panel represents a different compartment—susceptible, exposed, infectious, quarantined, and recovered—and illustrates how these populations evolve over time. The figure also captures the extinction behavior, revealing how certain compartments approach extinction, which provides crucial insights into the model's long-term dynamics and the impact of the parameters on the system. As indicated in Theorem 5.1, this result confirms that the conditions outlined in the theorem are satisfied in this case. However, when the conditions outlined in the theory are not met, the system exhibits stochastic dynamic permanence. This behavior is clearly illustrated in Figure \ref{fig: Persistence}, where the populations continue to fluctuate without approaching extinction. In contrast to extinction, where certain compartments eventually disappear, dynamic permanence implies that the epidemic persists indefinitely, with fluctuations in the population levels of the compartments. This outcome aligns with the predictions made by the permanence theorem discussed earlier, further validating the model's ability to capture long-term epidemic behavior under varying conditions.


\begin{figure}[H]
\centering
\subfloat[Susceptible Population \(S(t)\)]{\includegraphics[width=0.45\textwidth]{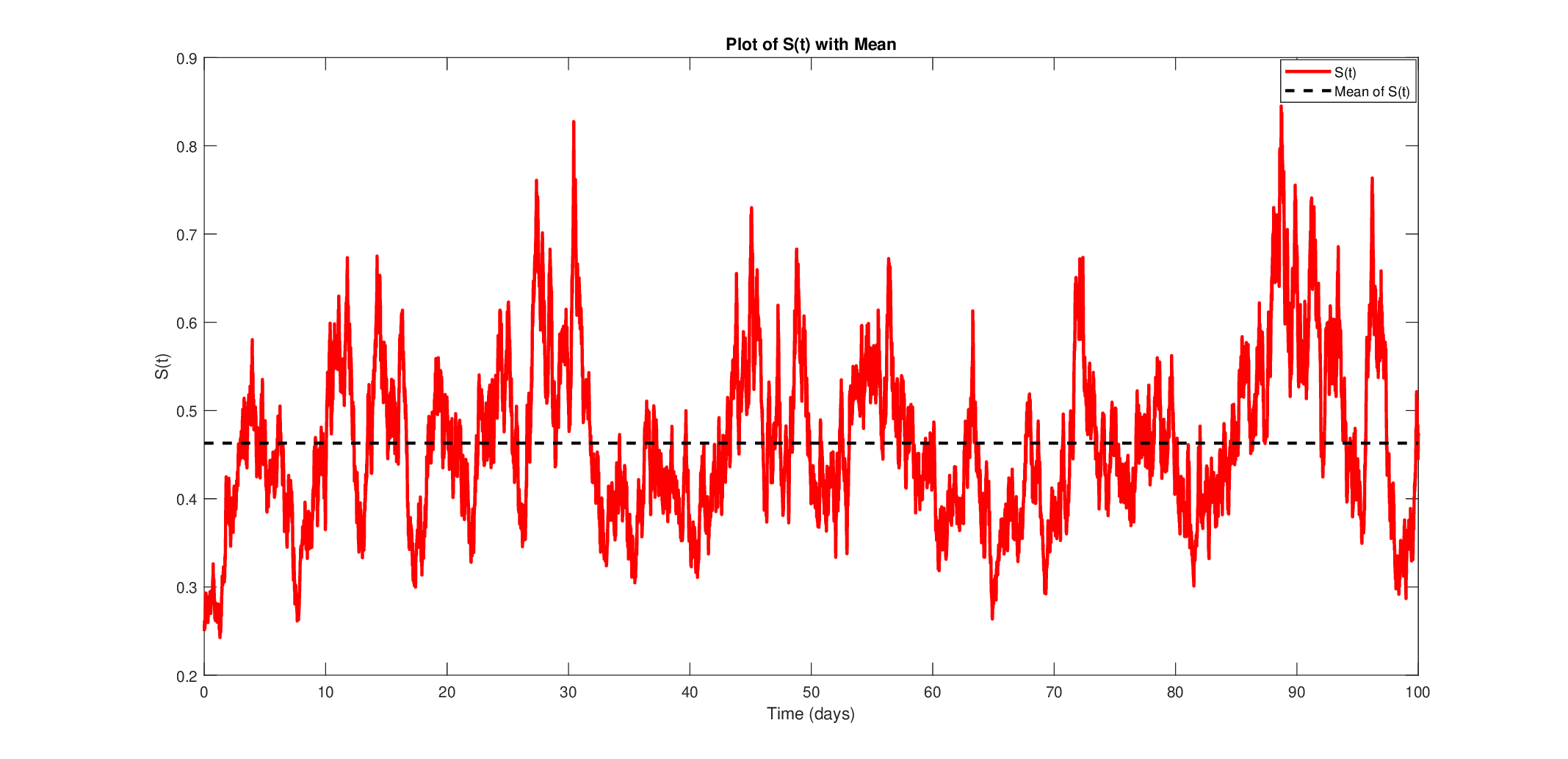}}\hspace{0.5cm}
\subfloat[Exposed Population \(E(t)\)]{\includegraphics[width=0.45\textwidth]{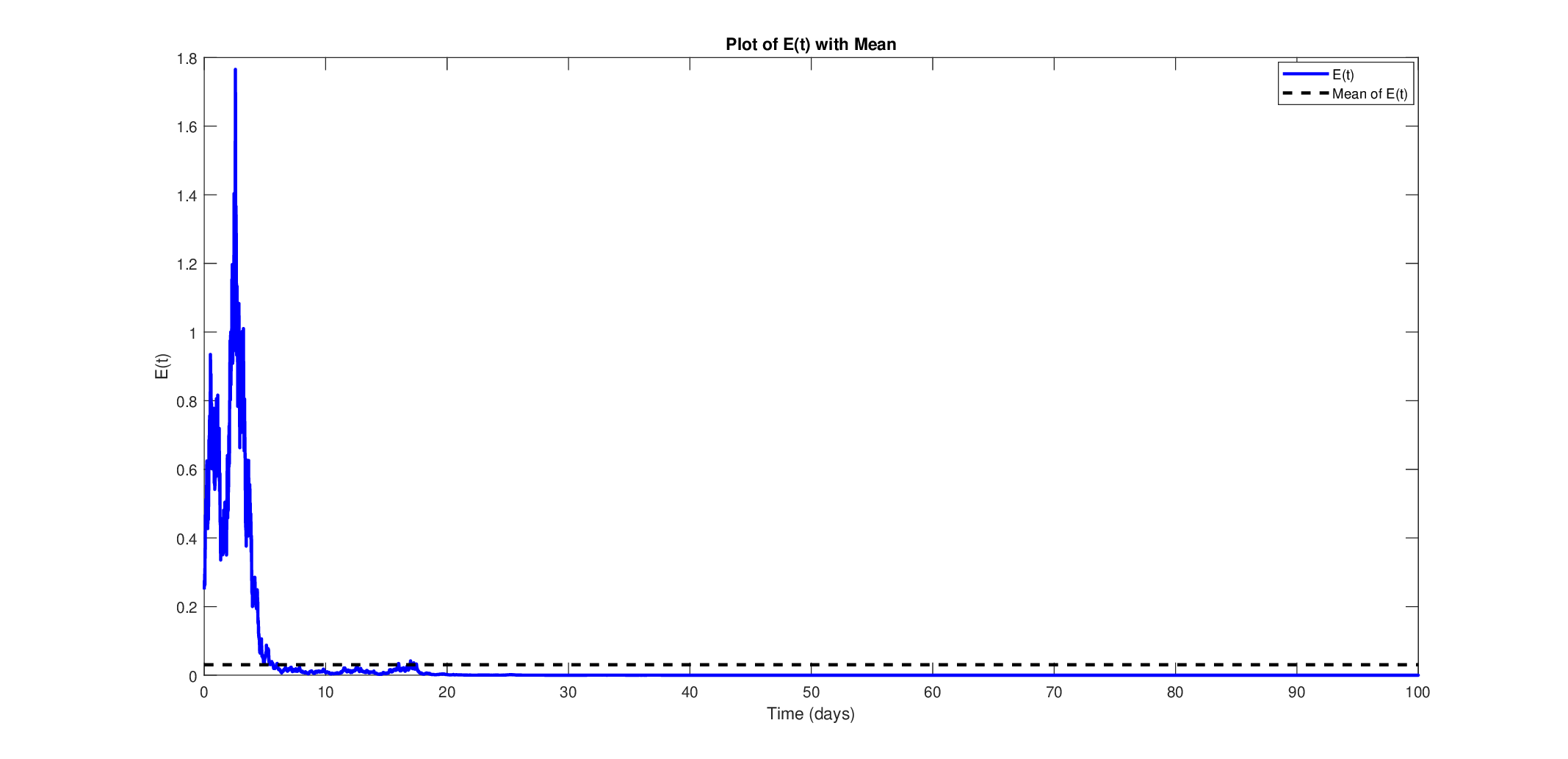}}\\
\subfloat[Infectious Population \(I(t)\)]{\includegraphics[width=0.45\textwidth]{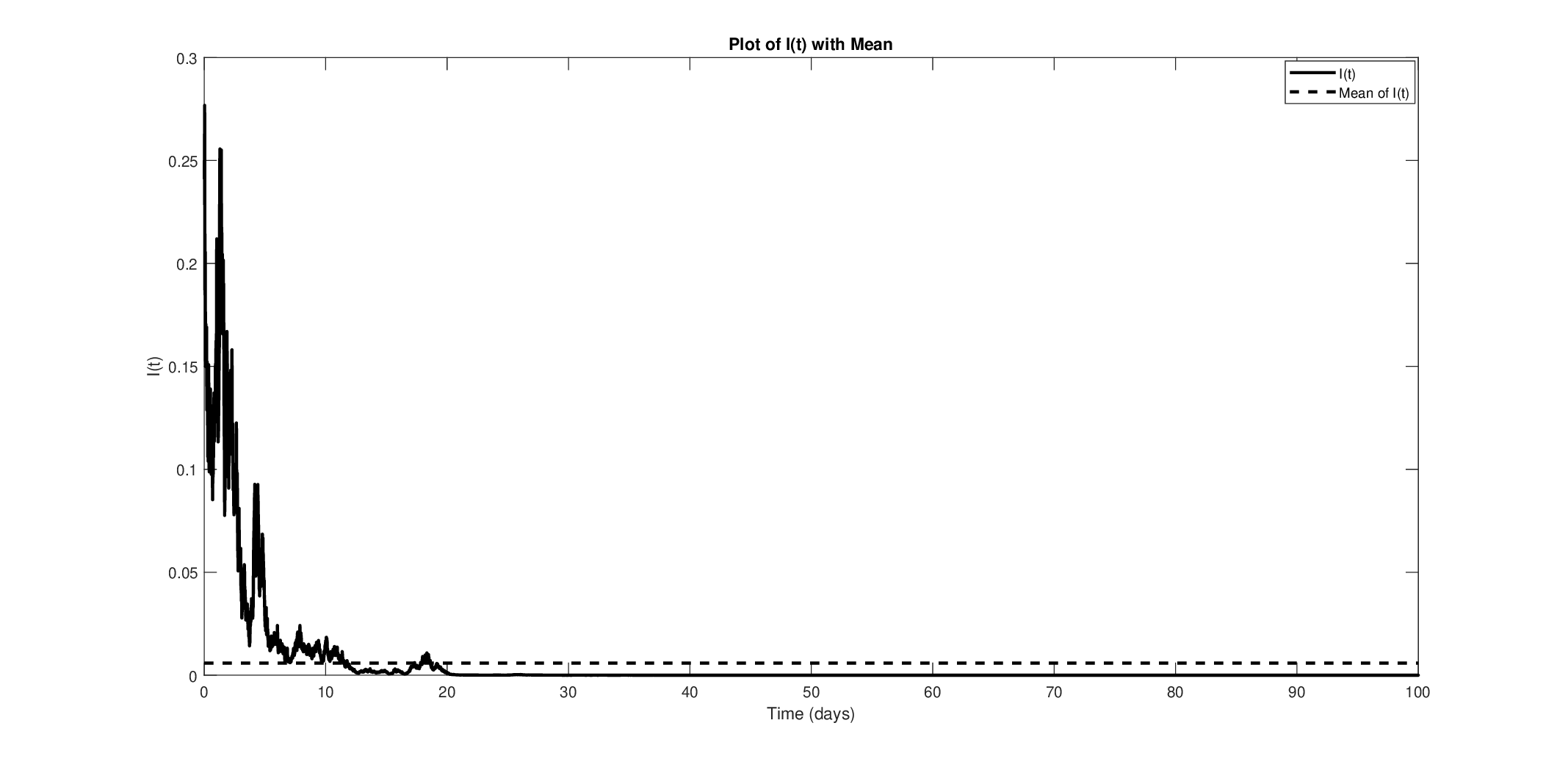}}\hspace{0.5cm}
\subfloat[Quarantined Population \(Q(t)\)]{\includegraphics[width=0.45\textwidth]{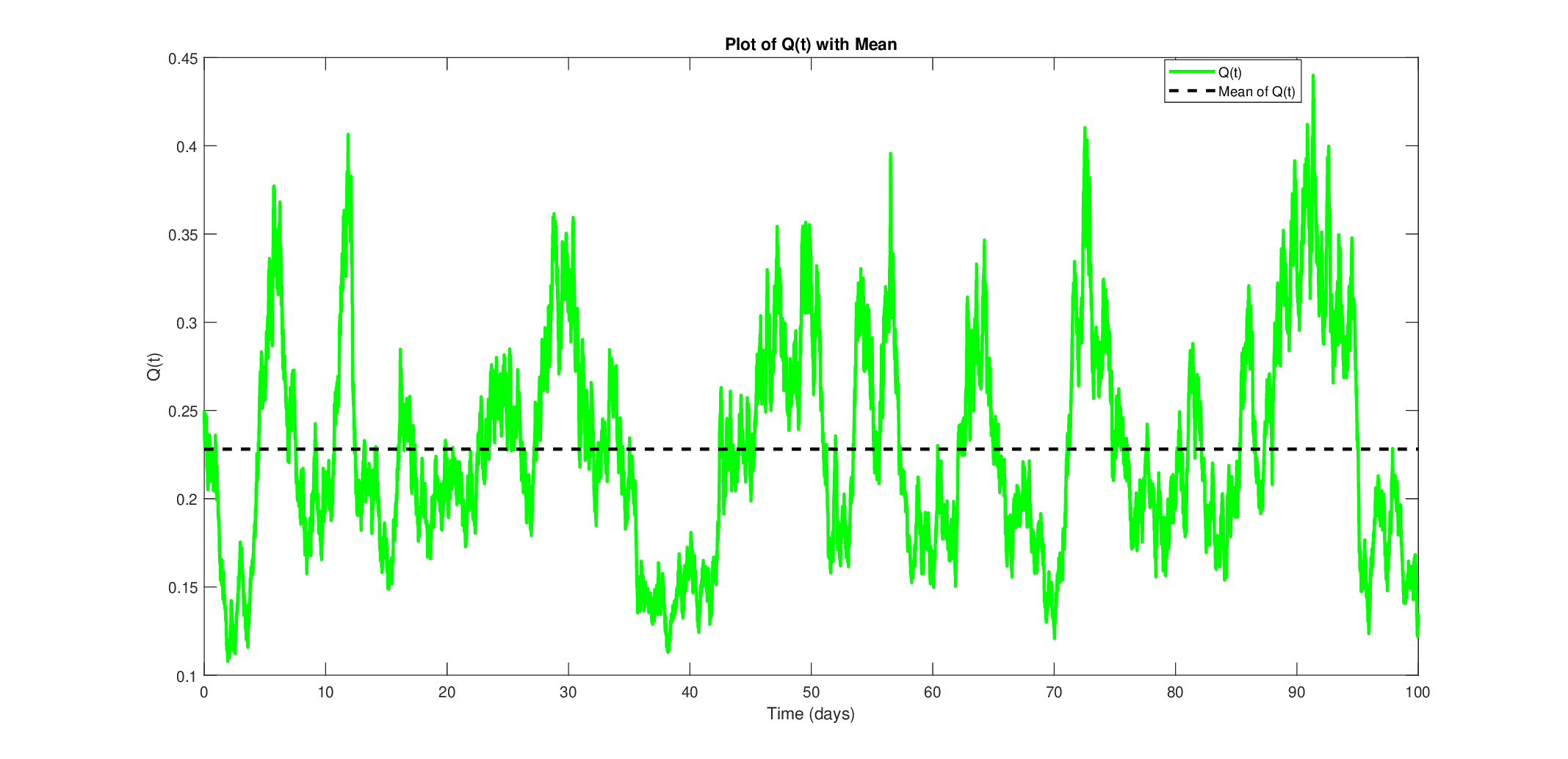}}\\
\subfloat[Recovered Population \(R(t)\)]{\includegraphics[width=0.6\textwidth]{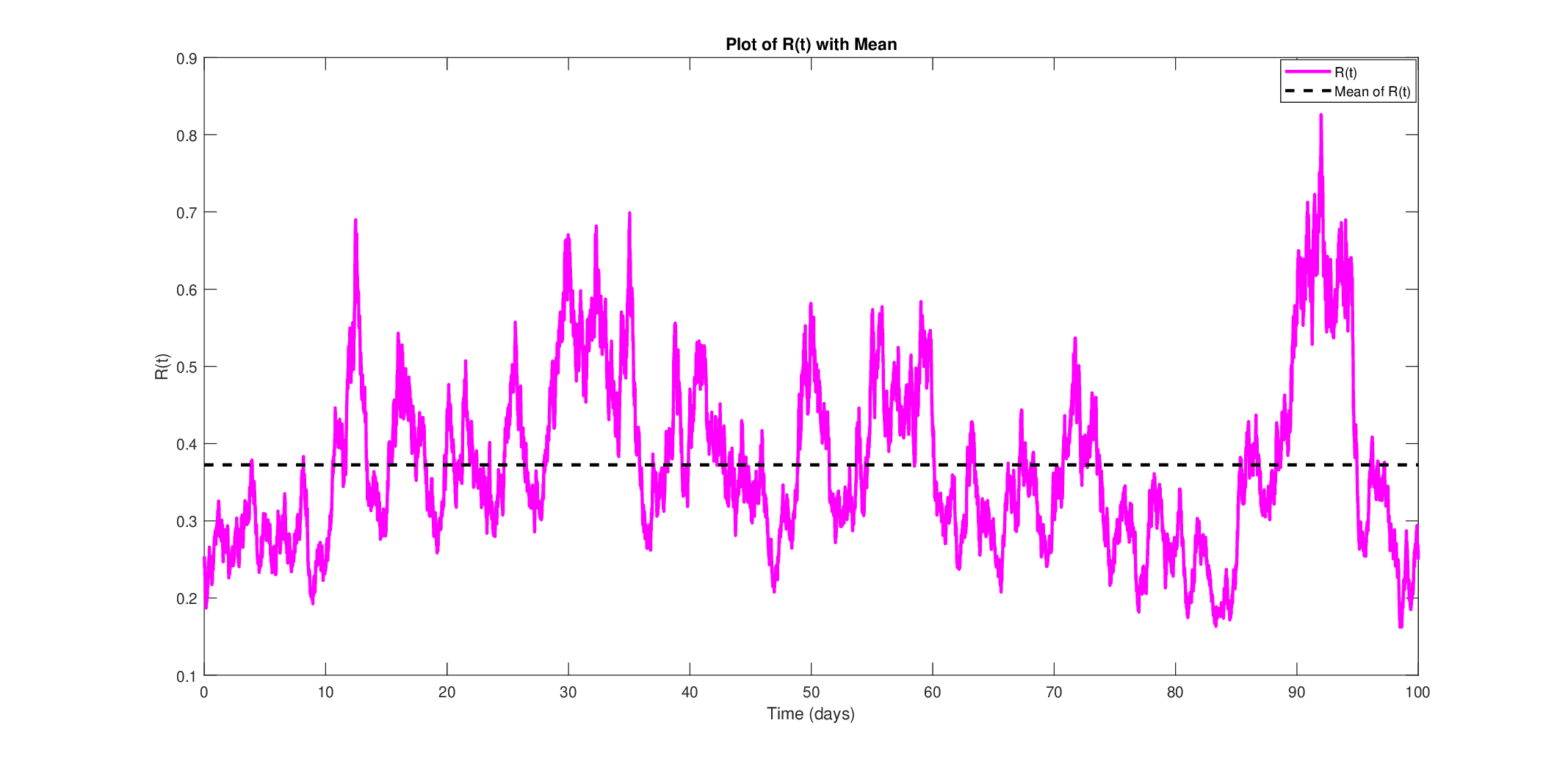}}

\caption{Trajectories of the SEIQR model over the interval \([0, 100]\). Panels (a) through (e) show the dynamics of the susceptible \(S(t)\), exposed \(E(t)\), infectious \(I(t)\), quarantined \(Q(t)\), and recovered \(R(t)\) populations, respectively.}
\label{fig: Extinction}
\end{figure}

\newpage

\begin{figure}[H]
\centering
\subfloat[]{\includegraphics[width=0.47\textwidth]{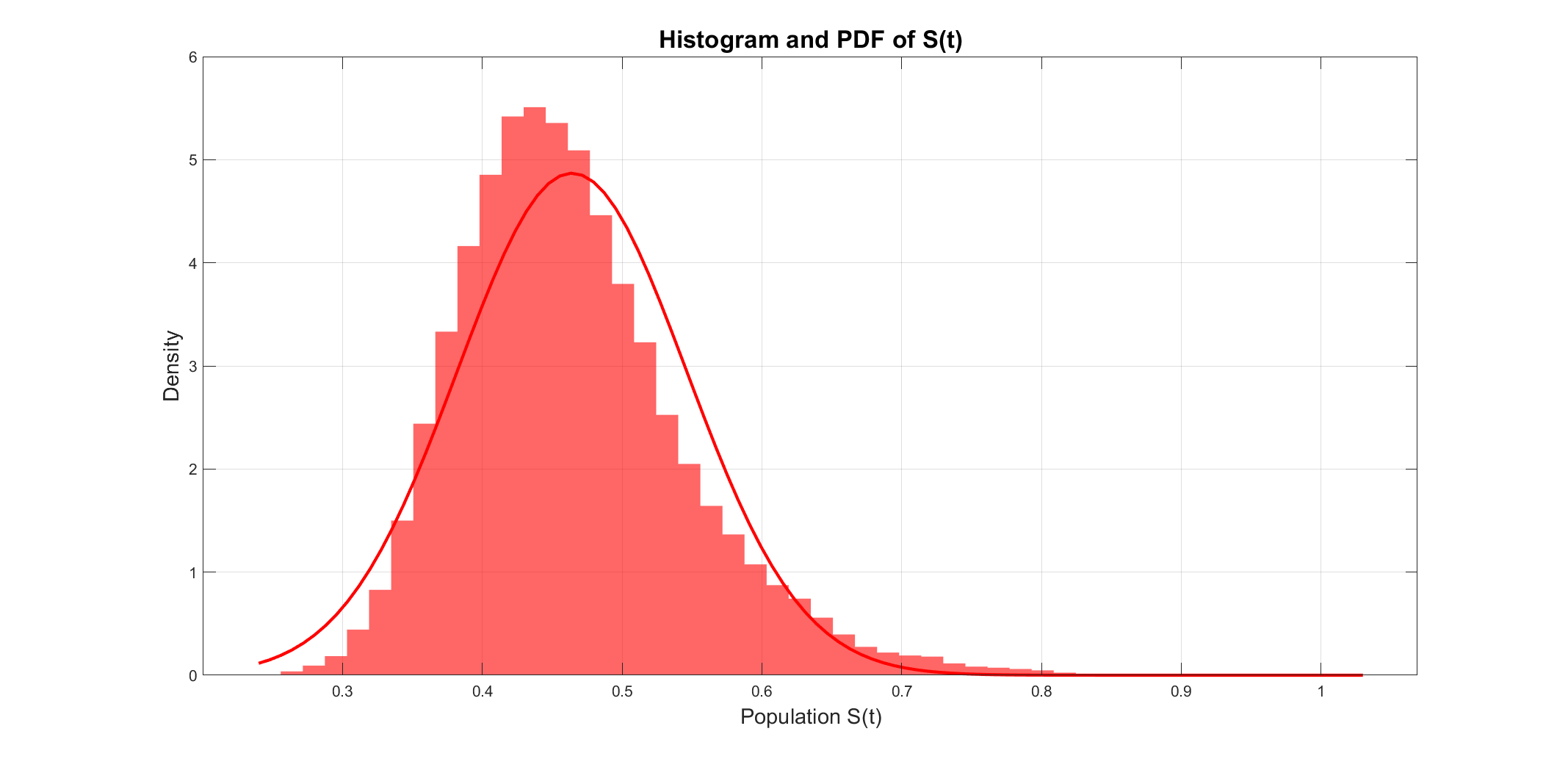}}\qquad
\subfloat[]{\includegraphics[width=0.47\textwidth]{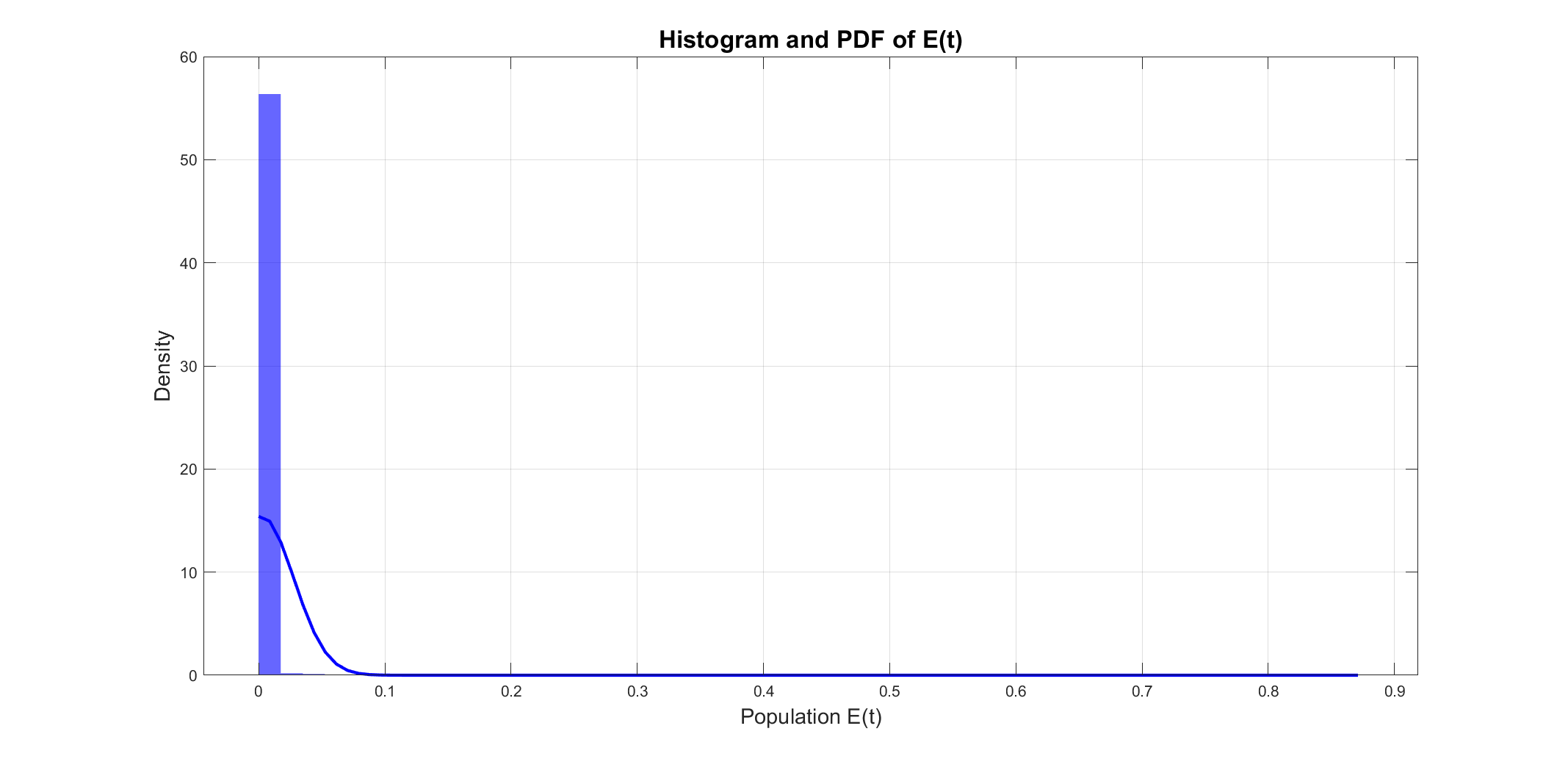}}\qquad
\subfloat[]{\includegraphics[width=0.47\textwidth]{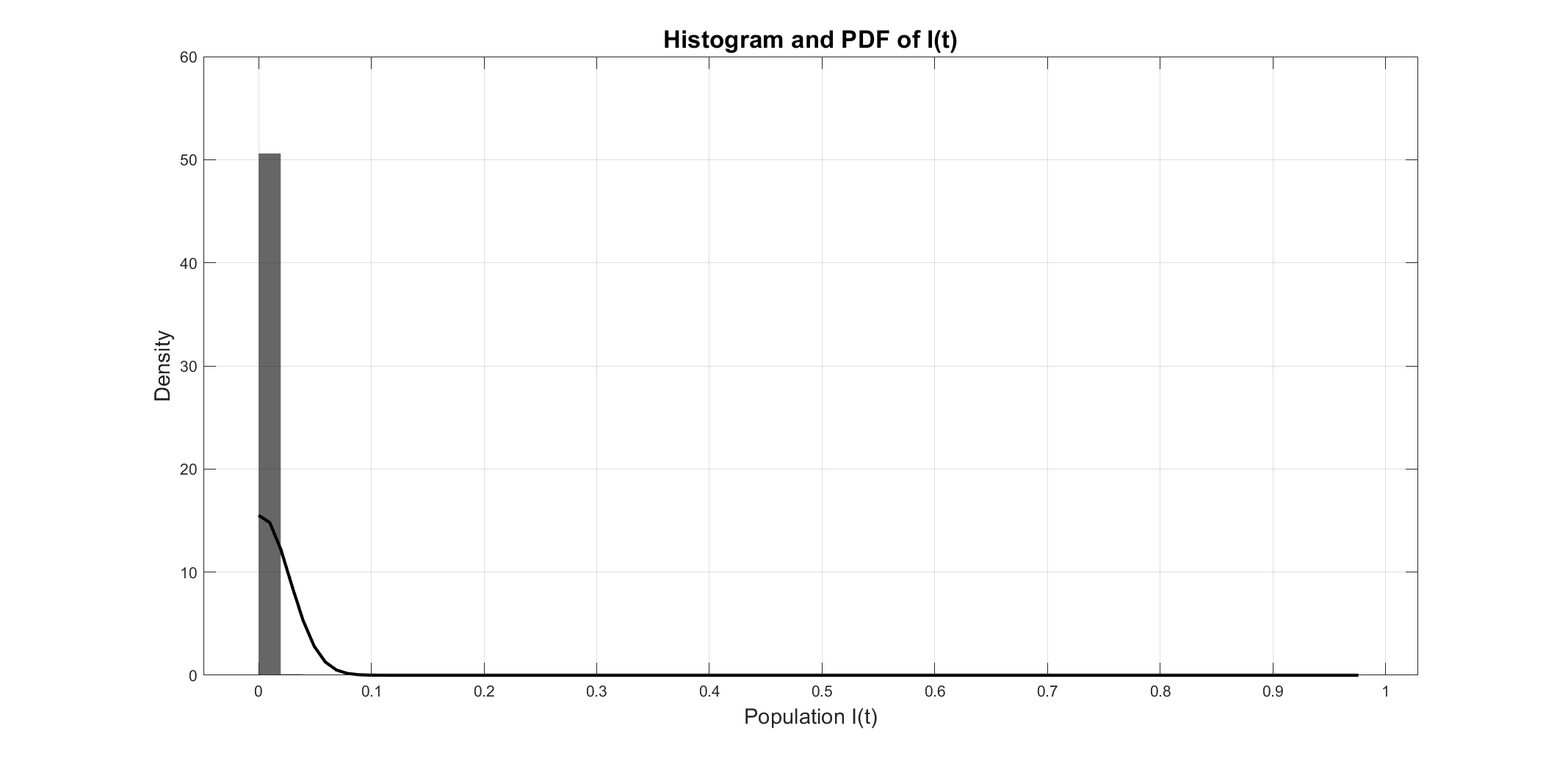}}\qquad
\subfloat[]{\includegraphics[width=0.47\textwidth]{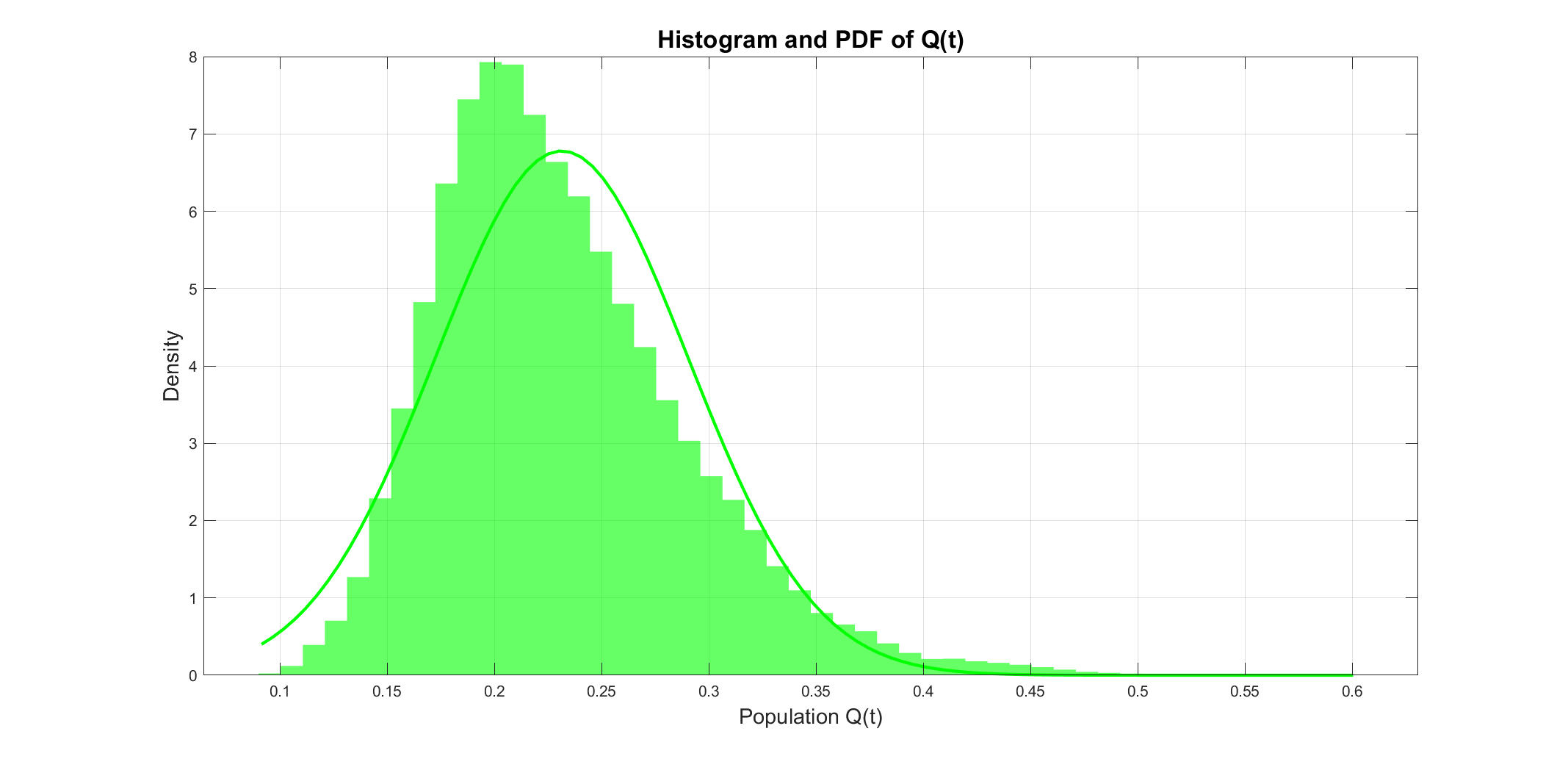}}\qquad
\subfloat[]{\includegraphics[width=0.55\textwidth]{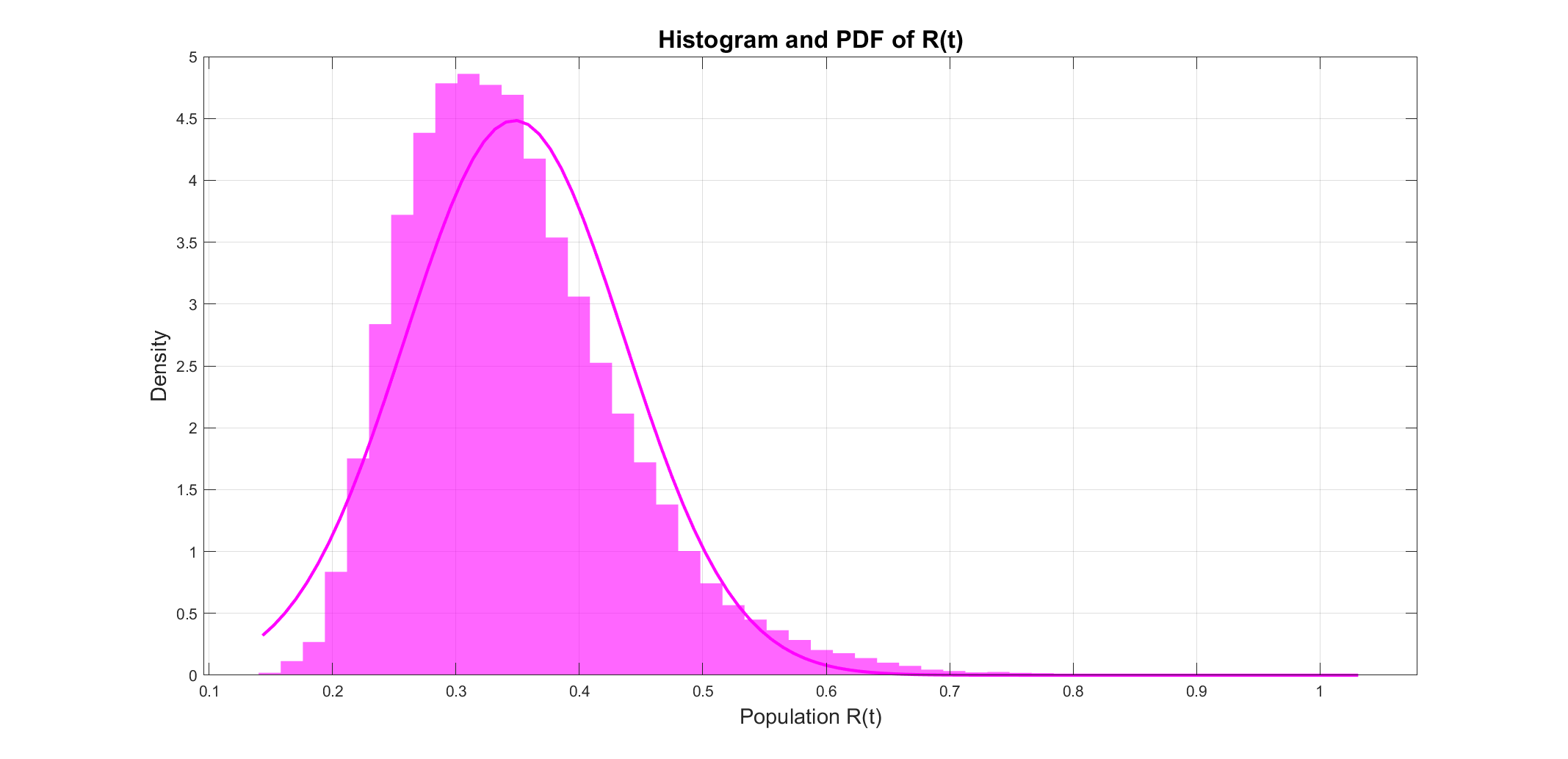}}

\caption{Histograms and the associated theoretical probability density functions of the stochastic paths.}
\label{fig: Simpdf}
\end{figure}

The plots in Figure \ref{fig: Simpdf}  show the histograms for the stochastic paths of the compartments \( S(t) \), \( E(t) \), \( I(t) \), \( Q(t) \), and \( R(t) \) at \( t = 1000 \), along with their theoretical probability density functions. These visualizations demonstrate how well the simulation results align with theoretical predictions and reveal the distribution of each compartment. For large \( t \), the results indicate that some compartments approach extinction, highlighting the impact of stochastic dynamics on the system's long-term behavior.

\begin{figure}[H]
\centering
\subfloat[]{\includegraphics[width=0.45\textwidth]{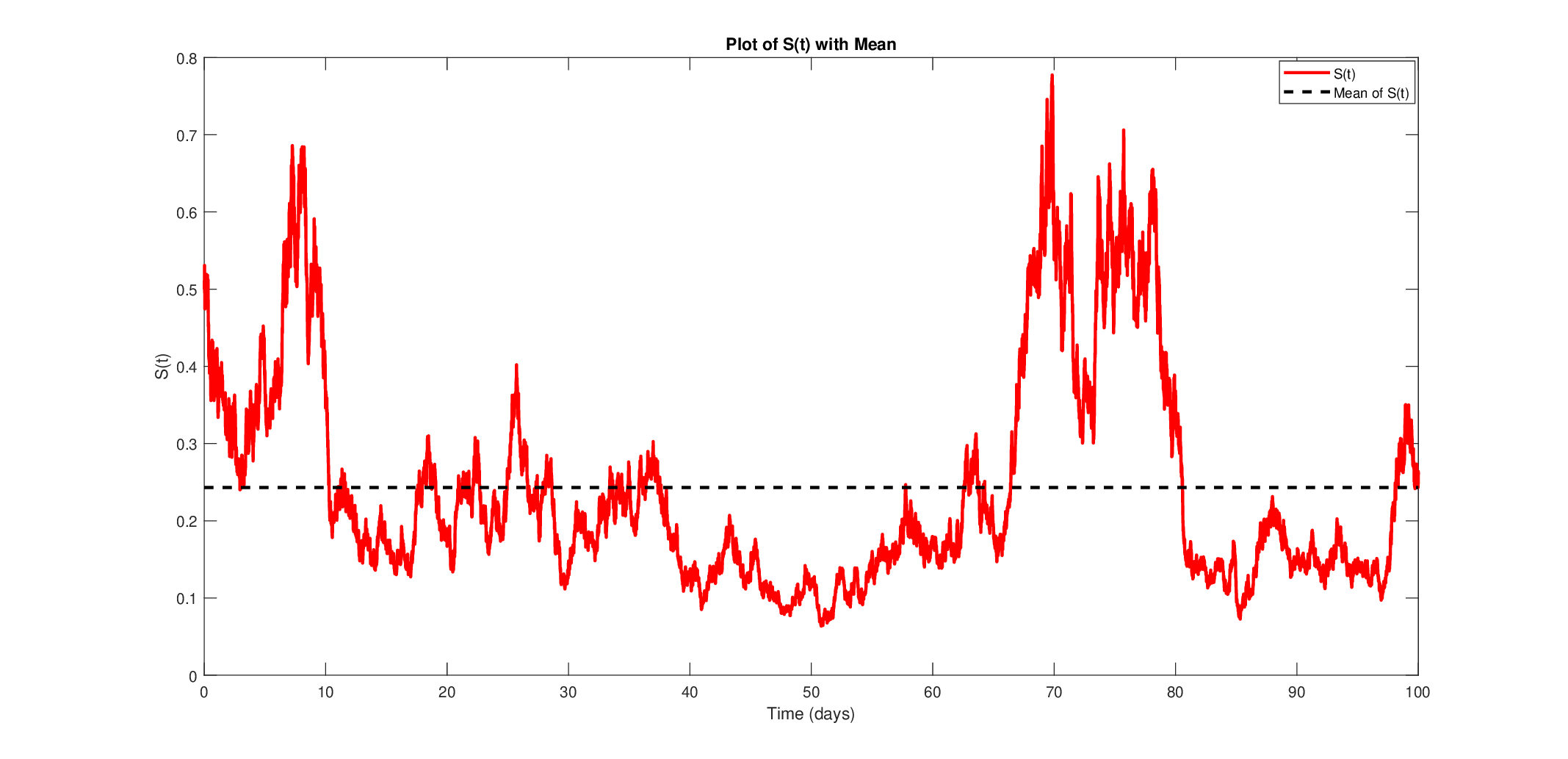}}\qquad
\subfloat[]{\includegraphics[width=0.45\textwidth]{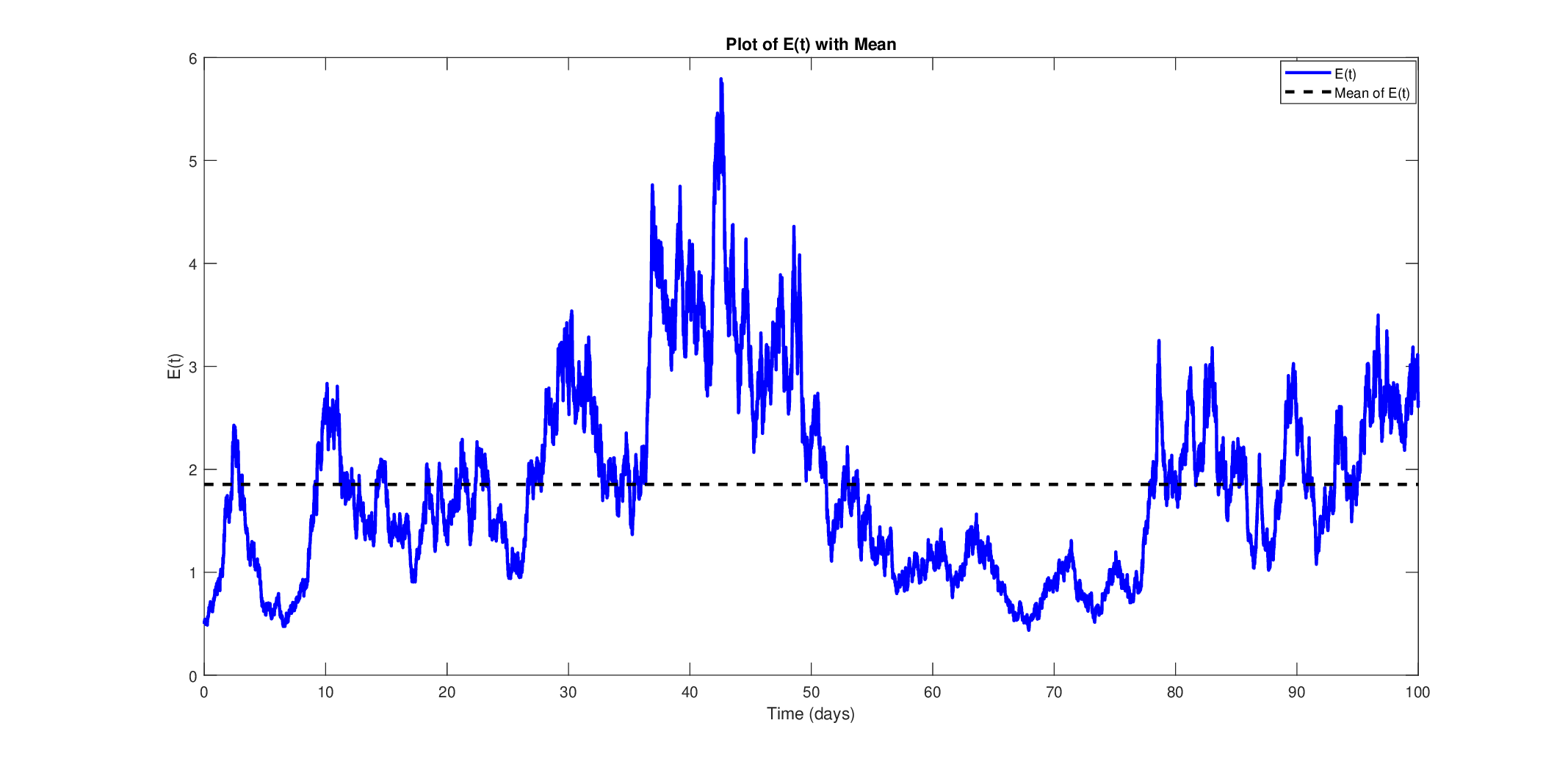}}\qquad
\subfloat[]{\includegraphics[width=0.45\textwidth]{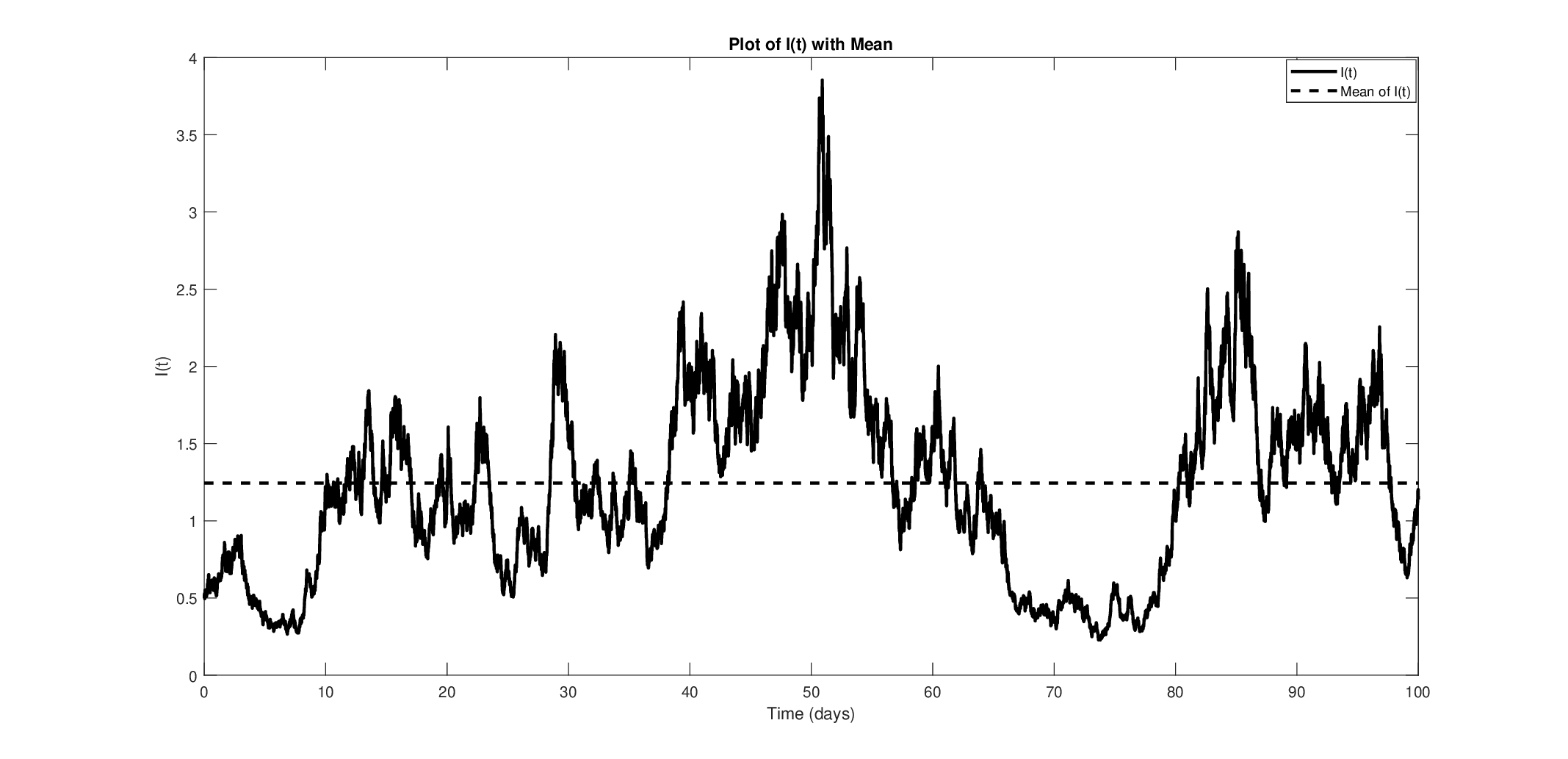}}\qquad
\subfloat[]{\includegraphics[width=0.45\textwidth]{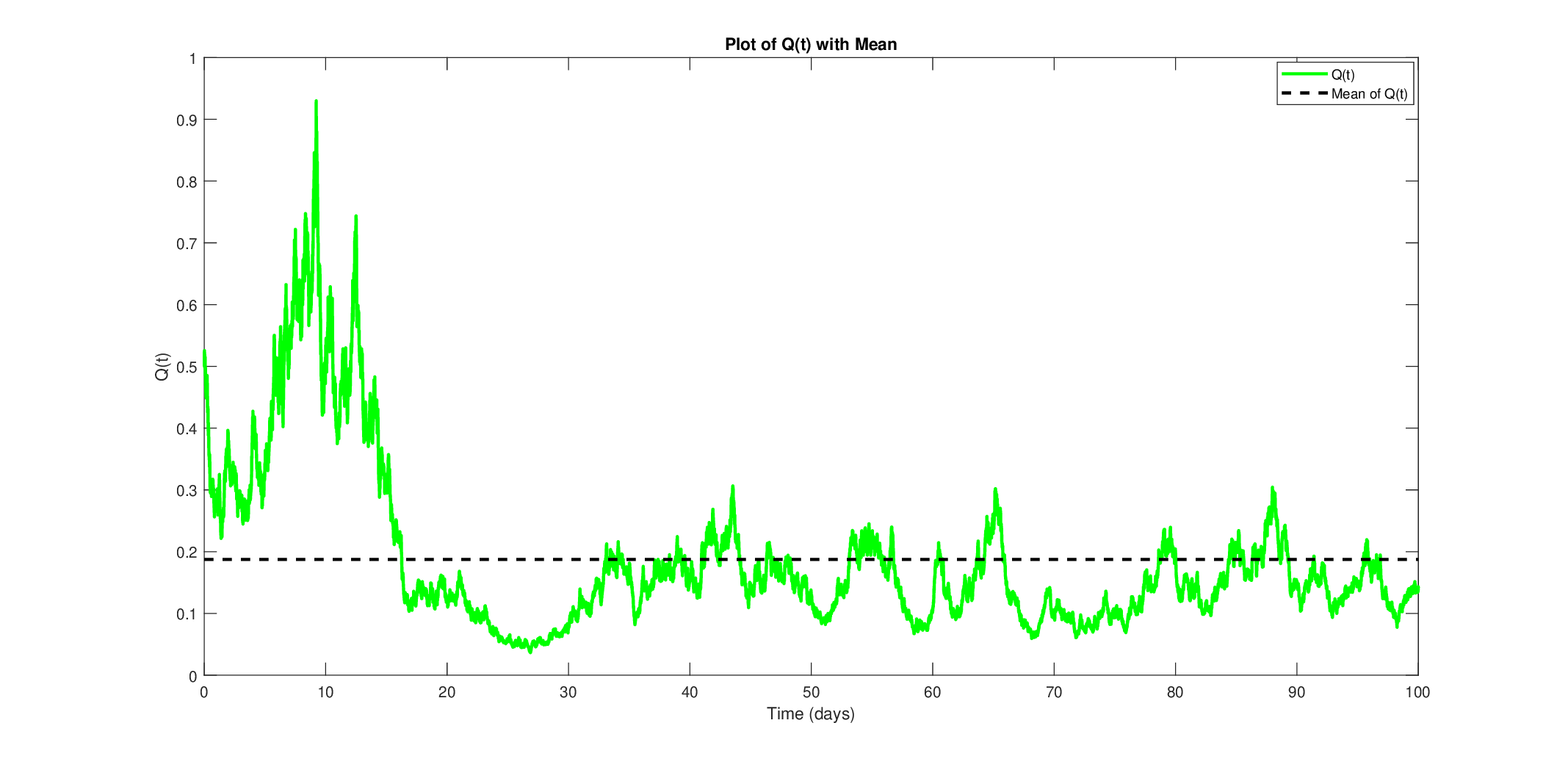}}\qquad
\subfloat[]{\includegraphics[width=0.45\textwidth]{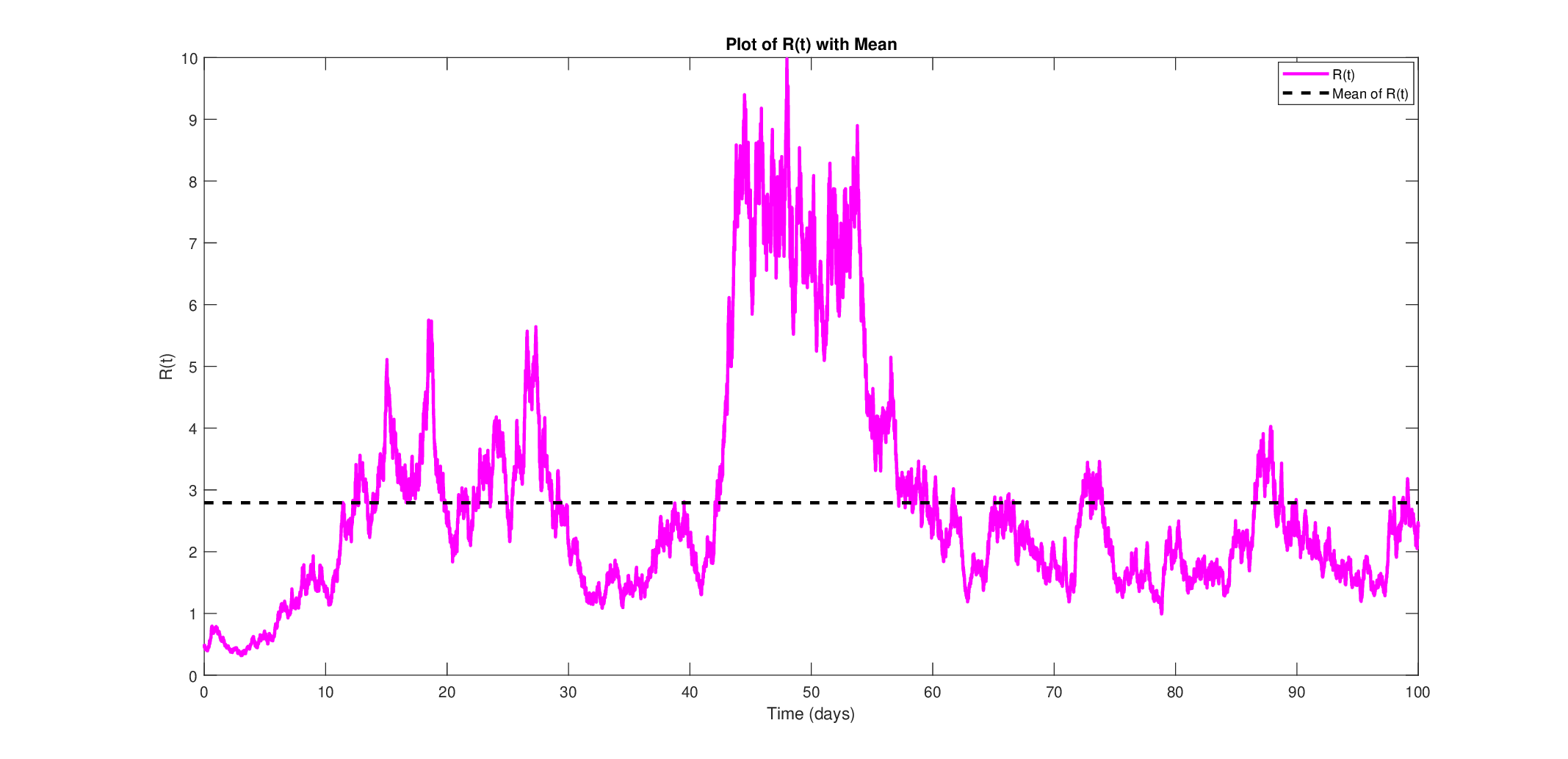}}

\caption{ Trajectories of the SEIQR model over the interval $[0, 100]$. (a) $S(t)$; (b) $E(t)$; (c) $I(t)$; (d) $Q(t)$; (e) $R(t)$.}
\label{fig: Persistence}
\end{figure}



 Moreover, the mesh surface plots in Figure \ref{fig: SimMesh1} and \ref{fig: SimMesh2} illustrate the joint probability density functions of different pairs of stochastic variables at time $t = 1000$. The plots present the joint distributions of the following pairs: \( S(t) \) and \( E(t) \), \( S(t) \) and \( I(t) \), \( S(t) \) and \( Q(t) \), \( S(t) \) and \( R(t) \), \( E(t) \) and \( I(t) \), \( E(t) \) and \( Q(t) \), \( E(t) \) and \( R(t) \), \( I(t) \) and \( Q(t) \), \( I(t) \) and \( R(t) \), and \( Q(t) \) and \( R(t) \). Each plot features a $3D$ surface view of the joint density, with the \( x \)- and \( y \)-axes representing the variables of interest and the \( z \)-axis indicating the density. These plots effectively reveal the interactions between different compartments in the SEIQR model, highlighting areas of higher joint probability where the densities are significant. This detailed visualization aids in understanding how the compartments influence each other and the overall dynamics of the epidemic model.

\begin{figure}[H]
\centering
\subfloat[]{\includegraphics[width=0.47\textwidth]{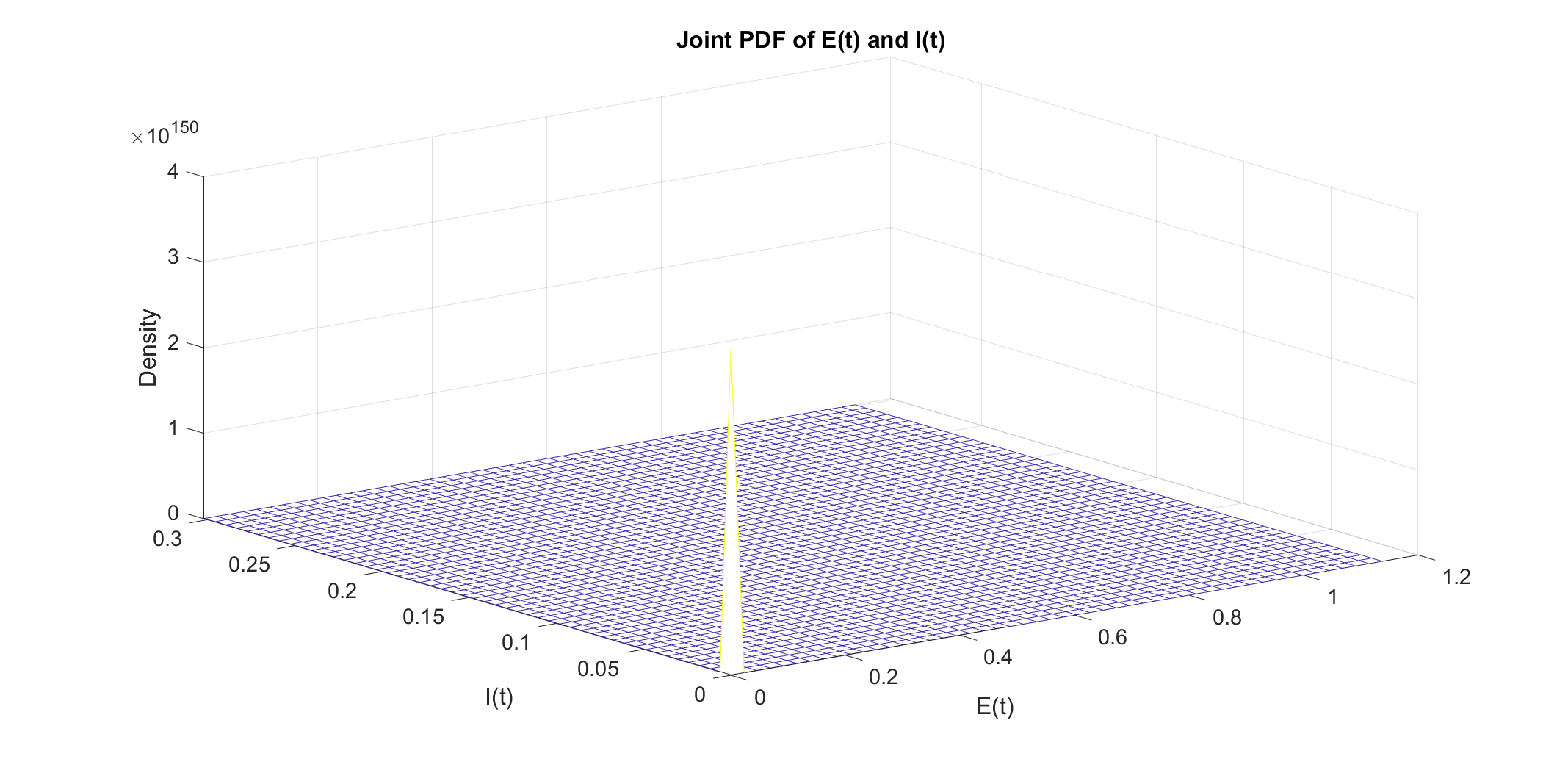}}\qquad
\subfloat[]{\includegraphics[width=0.47\textwidth]{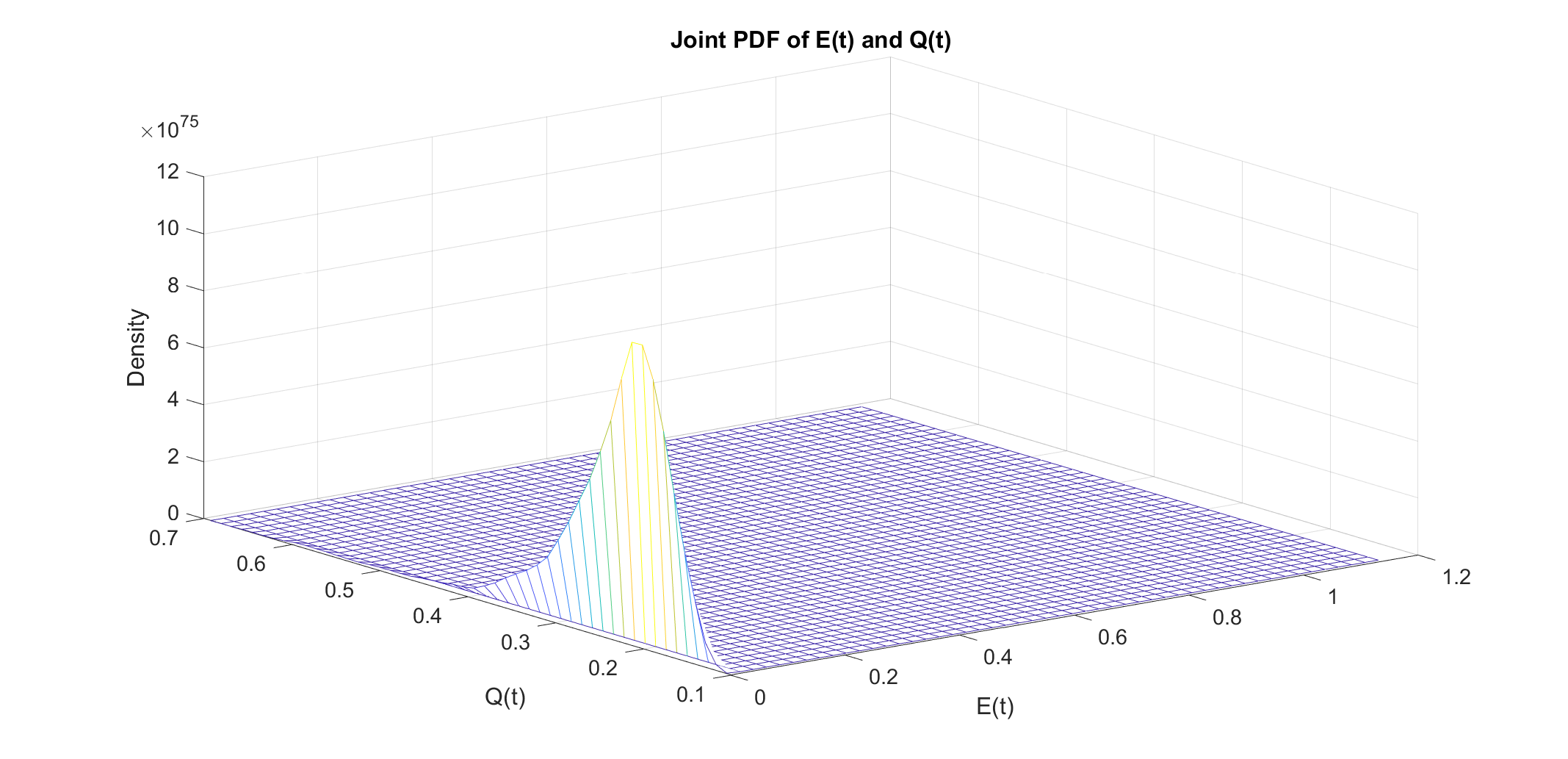}}\qquad
\subfloat[]{\includegraphics[width=0.47\textwidth]{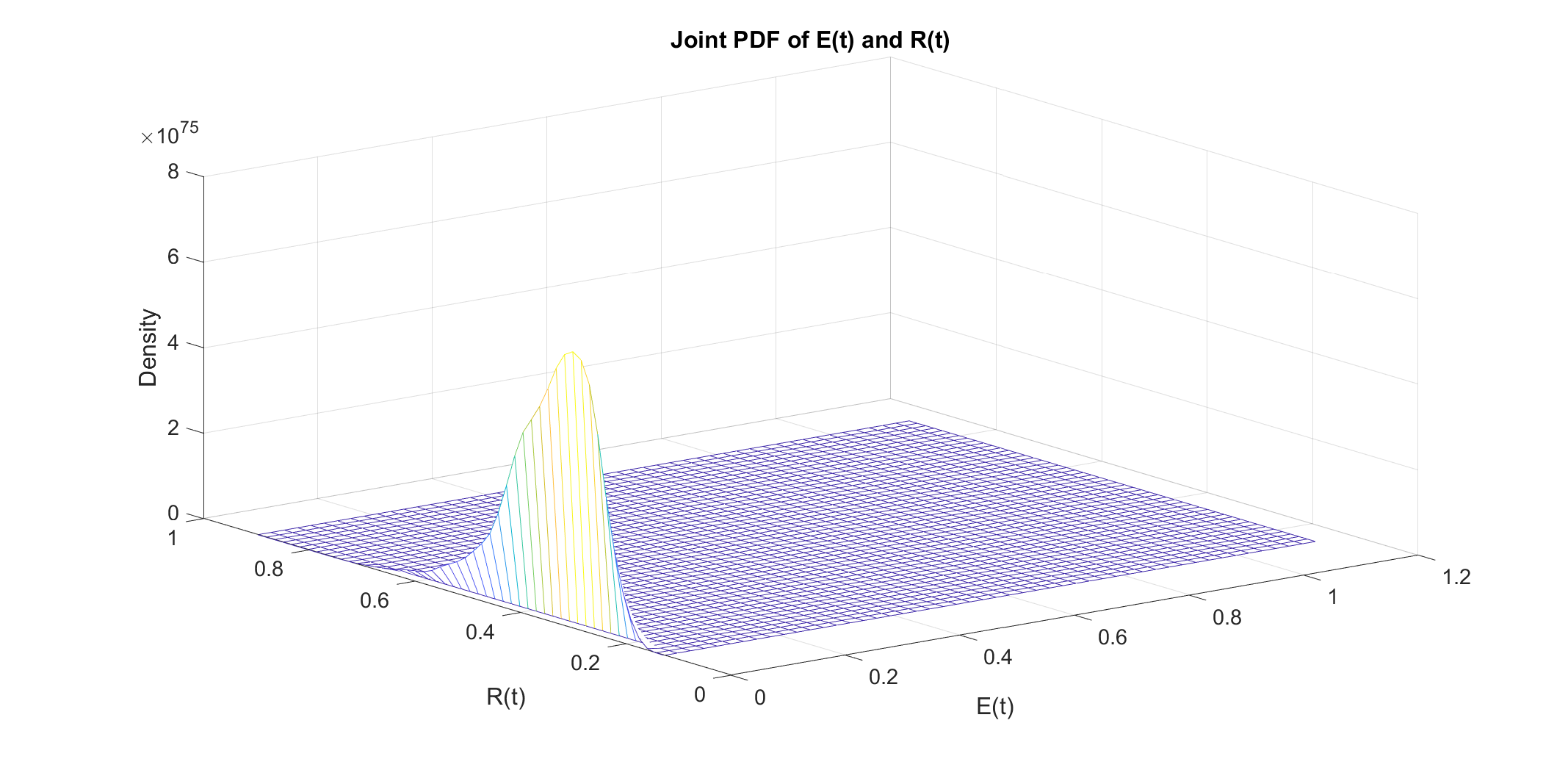}}\qquad
\subfloat[]{\includegraphics[width=0.47\textwidth]{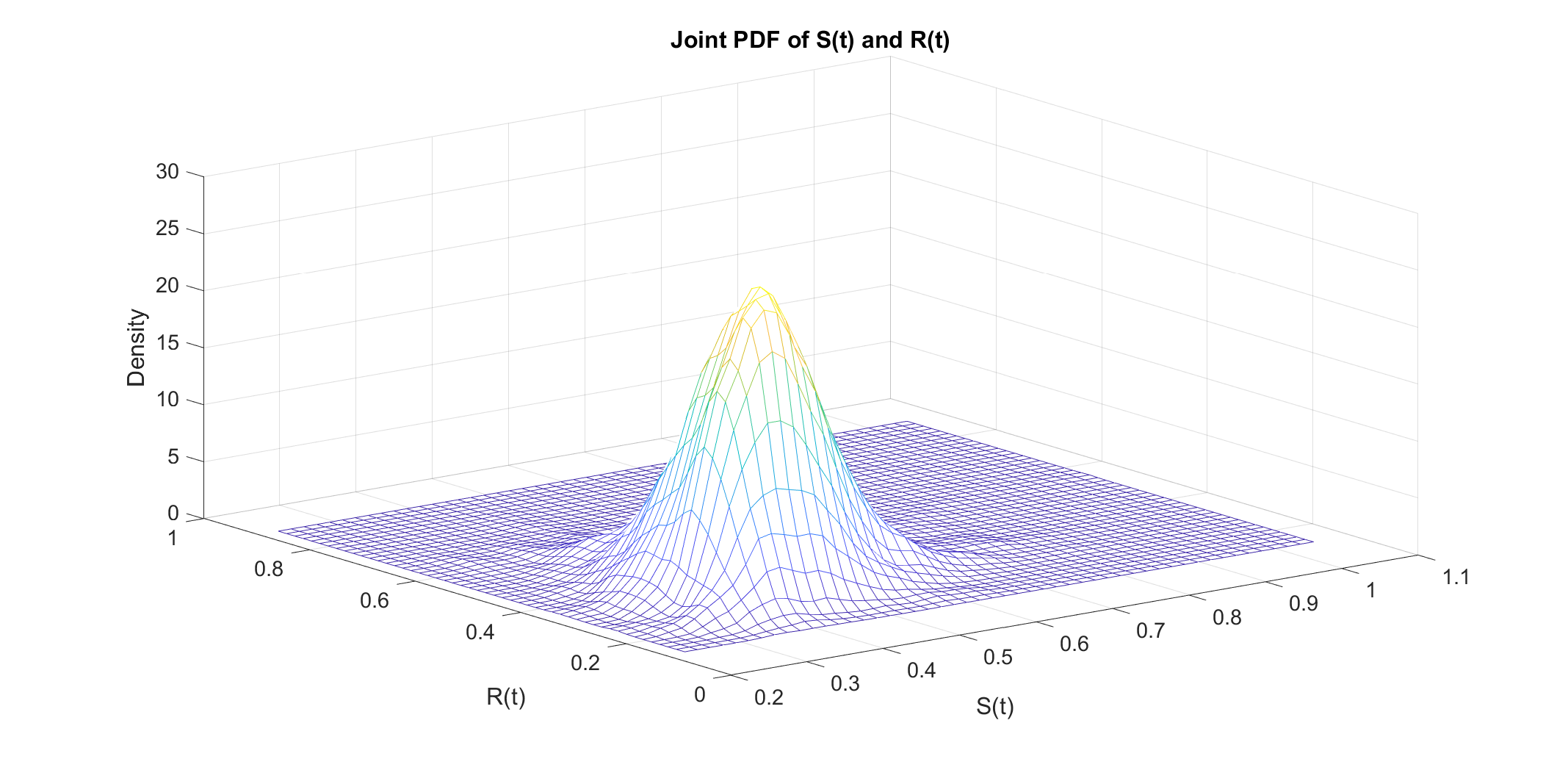}}\qquad
\subfloat[]{\includegraphics[width=0.55\textwidth]{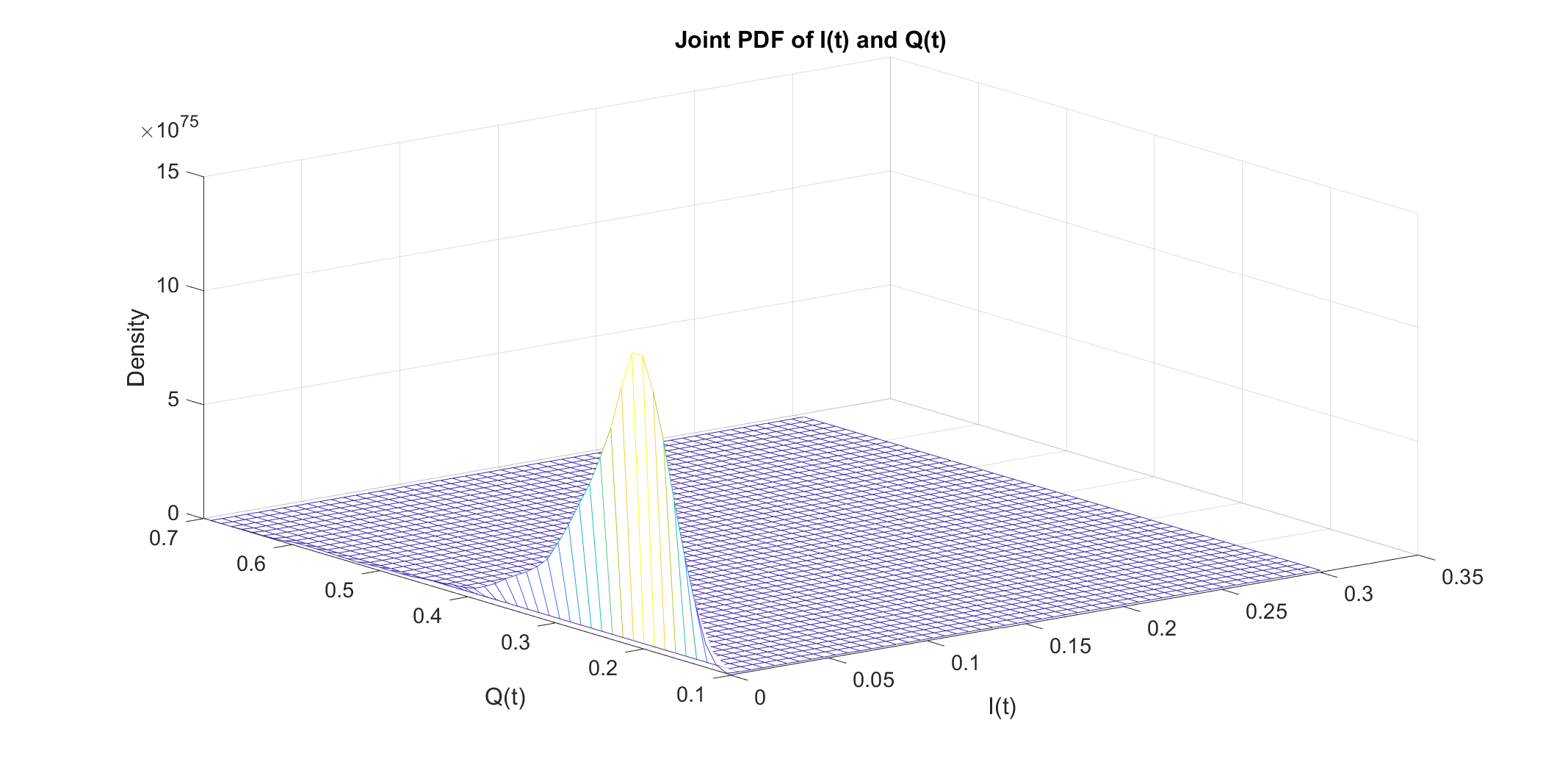}}

\caption{Numerical mesh surface plots of the joint probability density functions for various pairs of stochastic variables at \( t = 1000 \). (a) \( E(t) \) vs. \( I(t) \); (b) \( E(t) \) vs. \( Q(t) \); (c) \( E(t) \) vs. \( R(t) \); (d) \( S(t) \) vs. \( R(t) \); (e) \( I(t) \) vs. \( Q(t) \). }
\label{fig: SimMesh1}
\end{figure}

\begin{figure}[H]
\centering
\subfloat[]{\includegraphics[width=0.47\textwidth]{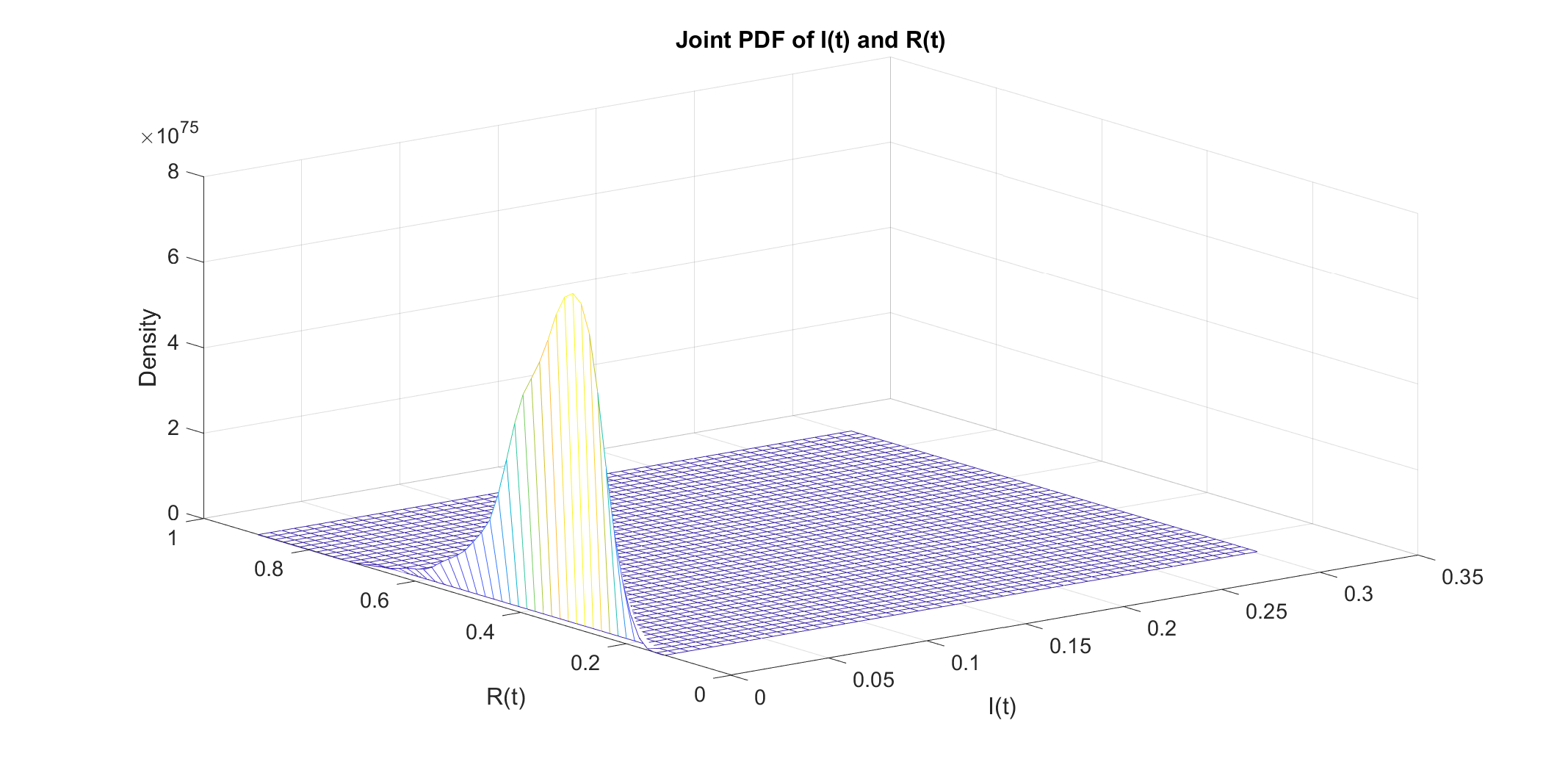}}\qquad
\subfloat[]{\includegraphics[width=0.47\textwidth]{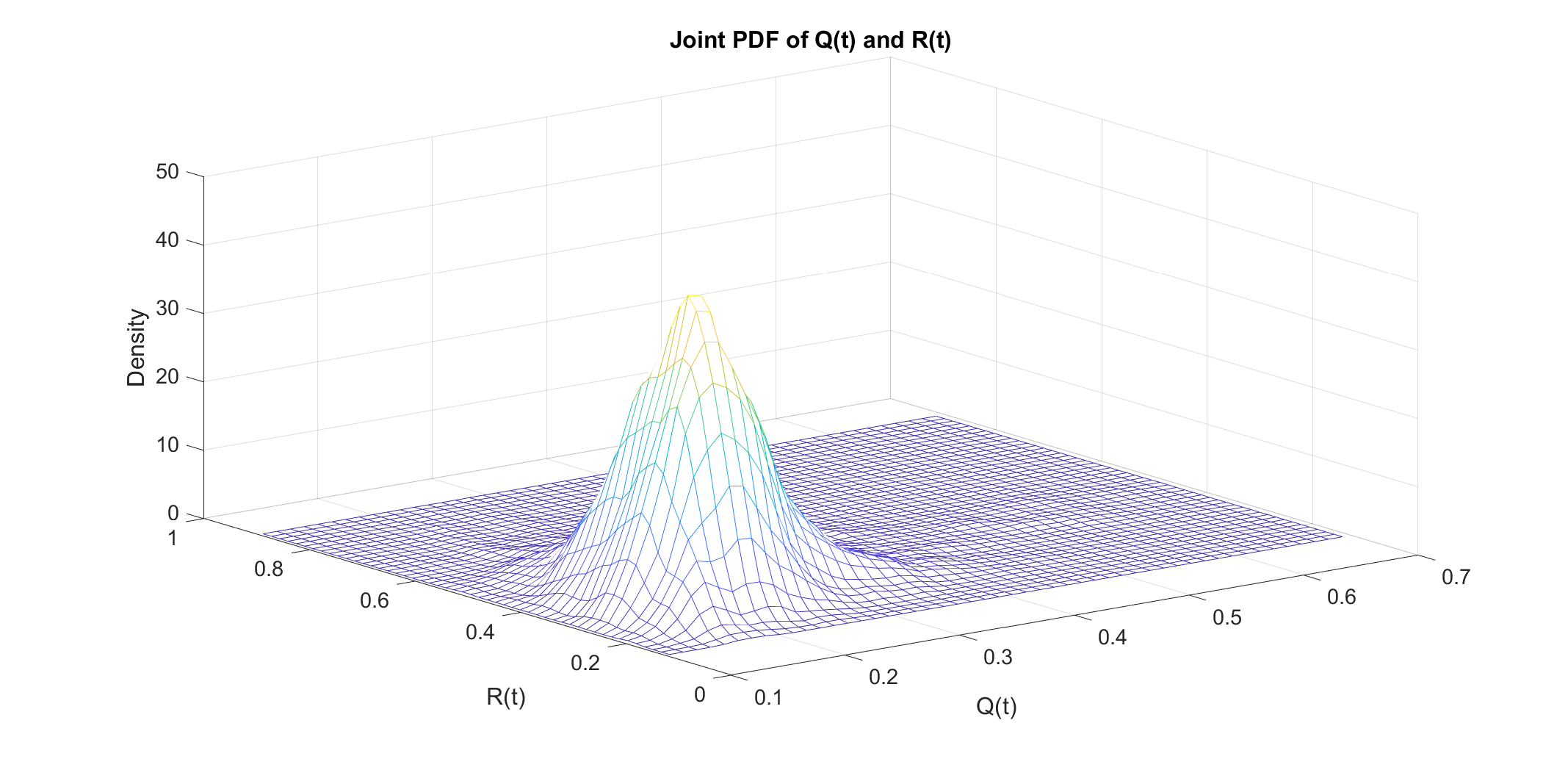}}\qquad
\subfloat[]{\includegraphics[width=0.47\textwidth]{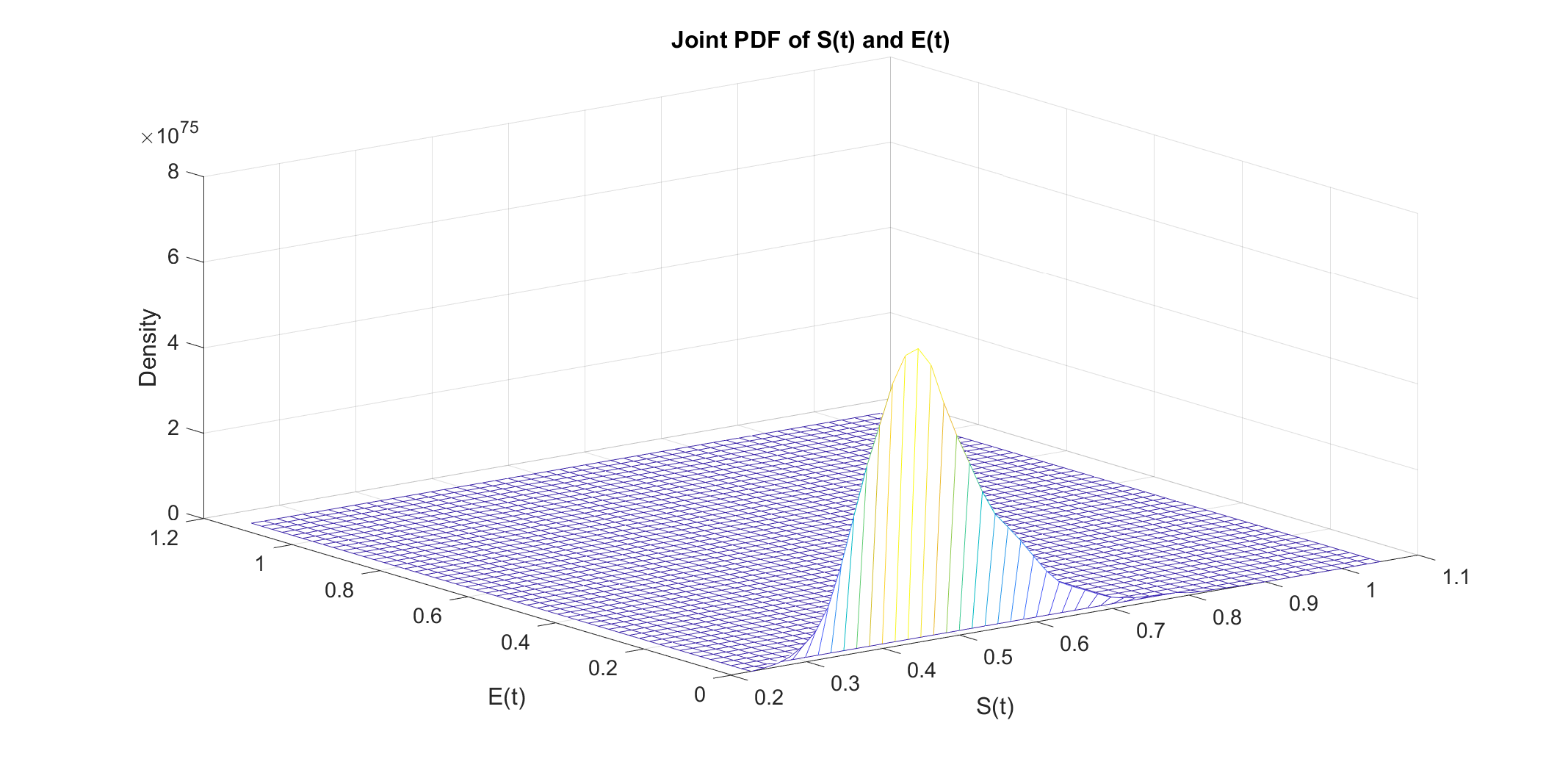}}\qquad
\subfloat[]{\includegraphics[width=0.47\textwidth]{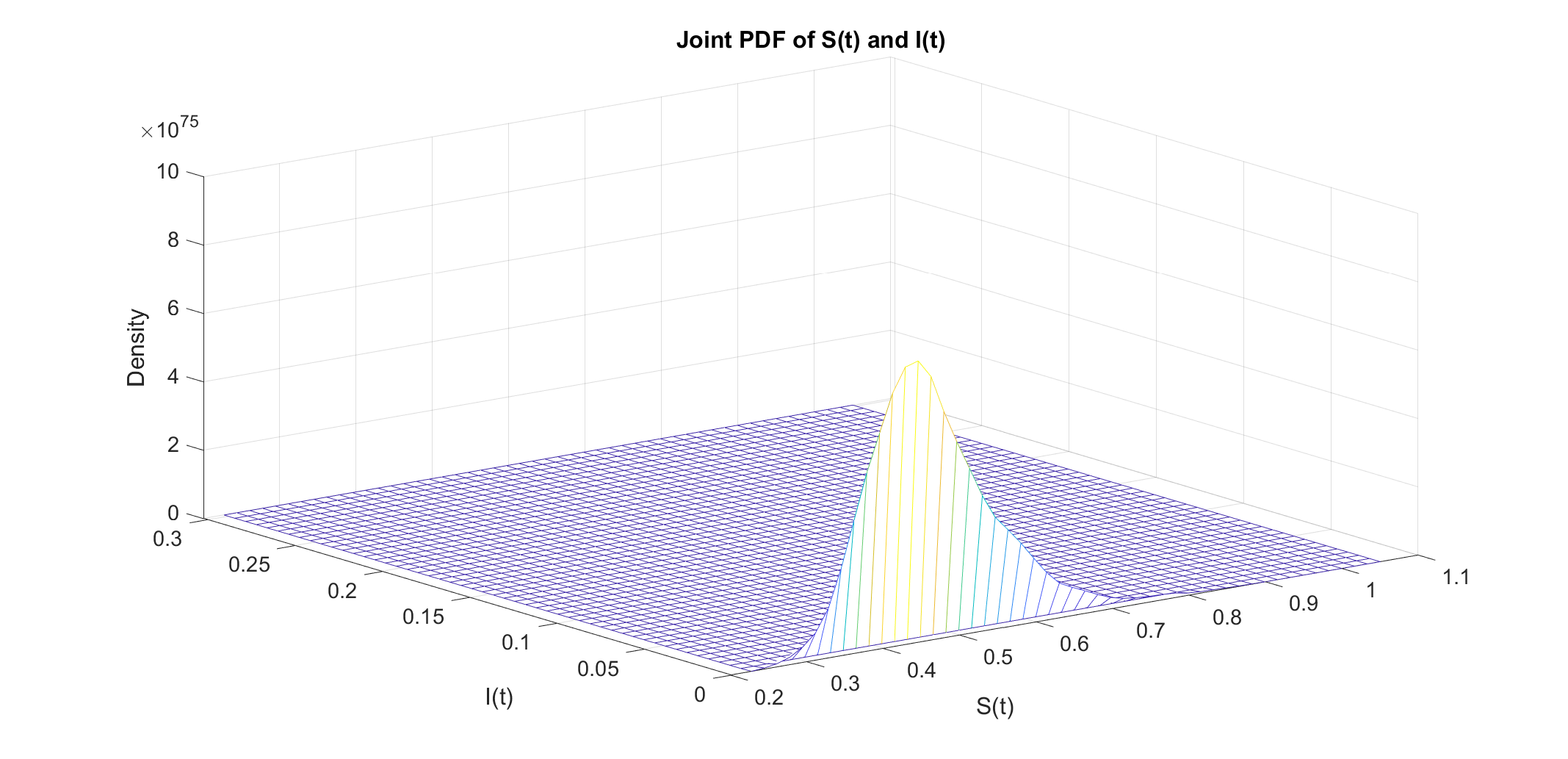}}\qquad
\subfloat[]{\includegraphics[width=0.55\textwidth]{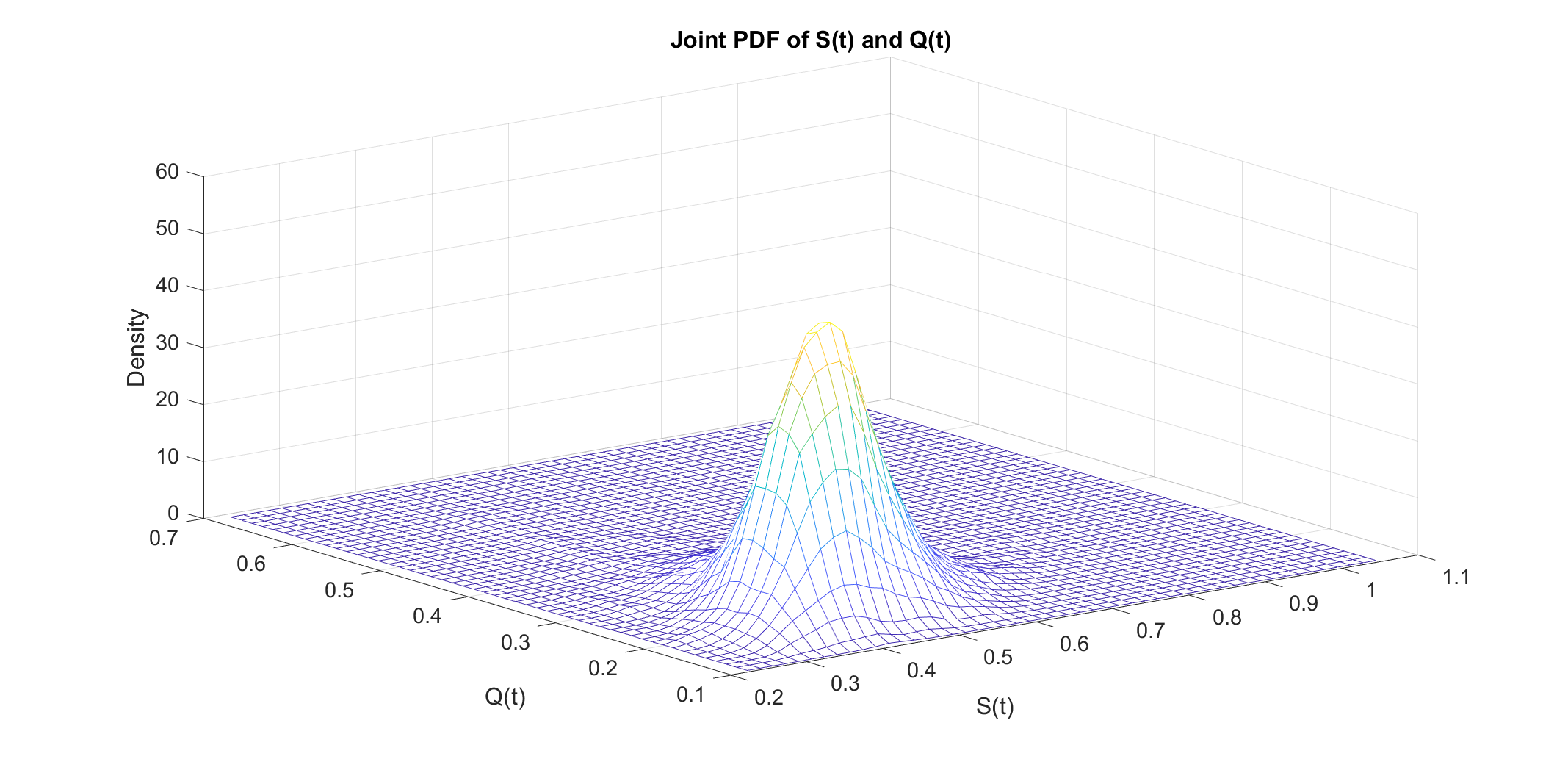}}

\caption{Numerical mesh surface plots of the joint probability density functions for various pairs of stochastic variables at \( t = 1000 \). (a) \( I(t) \) vs. \( R(t) \); (b) \( Q(t) \) vs. \( R(t) \); (c) \( S(t) \) vs. \( E(t) \); (d) \( S(t) \) vs. \( I(t) \); (e) \( S(t) \) vs. \( Q(t) \).}
\label{fig: SimMesh2}
\end{figure}

\section{Conclusion}
In this paper we propose a stochastic SEIQR epidemic model with a generalized incidence rate. We proved the uniqueness and the boundedness of the global positive solution for any positive initial value.
We also derived the stochastic ultimate boundedness and permanence for the system \eqref{sys1}. Moreover, we investigate the stochastic extinction and V-geometric ergodicity. All these theoretical findings are verified using the first-order Itô-Taylor stochastic scheme (also called the Milstein scheme).

{\small

%

%
%
%
%
%
%

\begin{thebibliography}{99}
\bibitem{01}
W.O. Kermack and A.G. McKendrick. Contributions to the mathematical theory of epidemics—i. Bltn Mathcal Biology,
53:33–55, 1991.
\bibitem{02}
B. Boukanjime, M. El Fatini, A. Laaribi, R. Taki, Analysis of a deterministic and a stochastic epidemic model with two distinct epidemics hypothesis, Physica A 534 (2019) 1–14.
\bibitem{03}
Khan MA, Atangana A. Modeling the dynamics of novel coronavirus (2019-nCoV) with fractional derivative. Alexandria Eng J 2020.	

\bibitem{04}
Zhang T, Meng X, Zhang T, Song Y. Global dynamics for a new high-dimensional SIR model with distributed delay. Appl Math Comput 2012, 8, 11806–19.	
	
\bibitem{05} 
B. Ivorra, M.R. Ferrández, M. Vela-Pérez, A. Ramos, Mathematical modeling of the spread of the coronavirus disease 2019 (COVID-19) taking into
account the undetected infections. The case of China, Commun. Nonlinear Sci. Numer. Simul. 88 (2020) 105303.	
	
	
\bibitem{07}
B. Boukanjime, M. El Fatini, A. Laaribi, R. Taki, Analysis of a deterministic and a stochastic epidemic model with two distinct epidemics hypothesis, Physica A 534 (2019) 1–14.
\bibitem{08}
B. Boukanjime, T. Caraballo, M. El Fatini, M. El Khalifi, Dynamics of a stochastic coronavirus (COVID-19) epidemic model with Markovian switching, Chaos Solitons Fractals 141 (2020) 110–361.
\bibitem{09}
R. Taki, M.E. Fatini, M.E. Khalifi, M. Lakhal, K. Wang, Understanding death risks of COVID-19 under media awareness strategy: a stochastic approach, J. Anal. 30 (1) (2022) 79–99.
\bibitem{general}
M. El Fatini, B. Boukanjime, Stochastic analysis of a two delayed epidemic model incorporating Lévy processes with a general non-linear transmission, Stoch. Anal. Appl. (2019) 1–16.
\bibitem{010}
B. Boukanjime, M. El Fatini, A. Laaribi, R. Taki, K. Wang, A Markovian regime-switching stochastic hybrid time-delayed epidemic model with
vaccination, Automatica (2021) 133.
\bibitem{011}
A. Laaribi, B. Boukanjime, M. El Khalifi, D. Bouggar, M. El Fatini, A generalized stochastic SIRS epidemic model incorporating mean-reverting Ornstein–Uhlenbeck process. Phys. A 615 (2023) 128-609.
\bibitem{012}
B. Boukanjime, M.EL. Fatini, A stochastic Hepatitis B epidemic model driven by Lévy noise, Phys. A 521 (2019) 796–806.	
	
	
	
	
\bibitem{1}
Khasminskii R. Stochastic stability of differential equations. Sijthoff and Noordfoff, Alphen, 1980, translation of the Russian edition, Nauka, Moscow 1969.
\bibitem{2}
X. Mao, Stochastic Differential Equations and Applications, Woodhead, 1997.
\bibitem{3}
F. Li, X. Meng, X. Wang, Analysis and numerical simulations of a stochastic
SEIQR epidemic system with quarantine-adjusted incidence and imperfect vaccination, Computational and mathematical methods in medicine 2018 (2018).

\bibitem{16}
 P. Roy, S. Jain, M. Maama, Assessing the viability of tri-trophic food chain model in designing a conservation plan: The case of dwindling quokka population, Ecological Complexity
 41 (2020) 100811.
 
\bibitem{17}
 P. Roy, S. Jain, M. Maama, The role of Allee effect in cannibalistic species: An action plan to sustain the declining cod population, Mathematical Modelling of Natural Phenomena
 19 (2024) 15.


\bibitem{4}
B. Arifah, X. Mao, Stochastic delay Lotka–Volterra model, J. Math. Anal. Appl. 292 (2) (2004) 364–380.	
\bibitem{5}
X. Li, X. Mao, Population dynamical behavior of non-autonomous Lotka–Volterra competitive system with random perturbation, Discret. Contin. Dyn.
Syst. 24 (2) (2009) 523–593.		
\bibitem{7}
Y. Shen, G. Zhao, M. Jiang, X. Mao, Stochastic lotka-volterra competitive systems with variable delay, in:
International Conference on Intelligent Computing, Springer, 2005, pp. 238–247.
\bibitem{8}
M. Liu, M. Fan, Permanence of stochastic lotka–volterra systems, Journal of Nonlinear Science 27 (2) (2017). 425–452.
\bibitem{6}
D. Jiang, N. Shi, X. Li, Global stability and stochastic permanence of a non-autonomous logistic equation with random perturbation, J. Math. Anal. Appl.
340 (1) (2008) 588–597.
\bibitem{9}
Y. Cai, Y. Kang, W. Wang, A stochastic sirs epidemic model with nonlinear incidence rate, Applied Mathematics and Computation 305 (2017) 221–240.
\bibitem{10}
Y. Zhao, D. Jiang, The threshold of a stochastic sis epidemic model with vaccination, Applied Mathematics and Computation 243 (2014) 718–727.
\bibitem{11}
L.R. Bellet, Ergodic properties of Markov processes, in: Open Quantum Systems II, Springer, Berlin Heidelberg, 2006, pp. 1–39.

\bibitem{12}
S.P. Meyn, R.L. Tweedie, Stability of Markovian processes III: Foster–Lyapunov criteria for continuous-time processes, Adv. Appl. Probab. (1993) 518–548.

\bibitem{13}
J.C. Mattingly, A.M. Stuart, D.J. Higham, Ergodicity for SDES and approximations: locally Lipschitz vector fields and degenerate noise, Stoch. Processes
Appl. 101 (2) (2002) 185–232.
\bibitem{14}
 A. Athreya, T. Kolba, J.C. Mattingly, Propagating Lyapunov functions to prove noise-induced stabilization, Electron. J. Probab. 17 (96) (2012) 1–38.
 \bibitem{15}
 A. A. Albanese, J. Bonet, W. J. Ricker, On the continuous Cesàro operator in certain function spaces, Positivity
 19 (3) (2015) 659–679.




\bibitem{18}
 Raza, A., Ahmadian, A., Rafiq, M., Salahshour, S., Ferrara, M, An analysis of a nonlinear susceptible-exposed-infected-quarantine-recovered pandemic model of a novel coronavirus with delay effect, Results in Physics 103771 (2021) 21.


\bibitem{Ajay1}
 Jasra, A., Maama, M., \& Ombao, H. (2023). Antithetic multilevel particle filters, Advances in Applied Probability, 1-40.

\bibitem{Ajay2}
 Jasra, A., Maama, M., \& Ombao, H. (2024). An improved unbiased particle filter, Monte Carlo Methods and Applications, 30(2), 149-162.
 
\bibitem{Mao}
 X.R. Mao, Stochastic Differential Equations and Applications, Horwood, New York, NY, USA, 1997
\bibitem{Milstein}
Milstein, G.N. and Tretyakov, M.V., 2004. Stochastic numerics for mathematical physics (Vol. 39). Berlin: Springer.
 
\end{thebibliography}
\end{document}